\newtheorem{definition} {Definition}
\newtheorem{te}{Theorem}
\newtheorem{ex}{Example}
\newcommand{\minf}{\scriptstyle{\text{-}\infty\;}}
\definecolor{darkgray}{rgb}{0.66, 0.66, 0.66}
\def\mathcolor#1#{\@mathcolor{#1}}
\def\@mathcolor#1#2#3{%
  \protect\leavevmode
  \begingroup
    \color#1{#2}#3%
  \endgroup
}
\begin{document}
\title{On the Minimum Distances of Finite-Length \\Lifted Product Quantum LDPC Codes}
\author{Nithin Raveendran\IEEEauthorrefmark{1}, David Declercq\IEEEauthorrefmark{1}, and Bane Vasi\'{c}\IEEEauthorrefmark{2}\\
Department of Electrical and Computer Engineering\\ 
The University of Arizona, Tucson, AZ 85721\\ \IEEEauthorrefmark{1}\{nithin,daviddeclercq\}@arizona.edu, \IEEEauthorrefmark{2}vasic@ece.arizona.edu}
\maketitle

\begin{abstract}
Quantum error correction (QEC) is critical for practical realization of fault-tolerant quantum computing, and recently proposed families of quantum low-density parity-check (QLDPC) code are prime candidates for advanced QEC hardware architectures and implementations.
This paper focuses on the finite-length QLDPC code design criteria, specifically aimed at constructing degenerate quasi-cyclic symmetric lifted-product (LP-QLDPC) codes. We describe the necessary conditions such that the designed LP-QLDPC codes are guaranteed to have a minimum distance strictly greater than the minimum weight stabilizer generators, ensuring superior error correction performance on quantum channels. The focus is on LP-QLDPC codes built from quasi-cyclic base codes belonging to the class of type-I protographs, and the necessary constraints are efficiently expressed in terms of the row and column indices of the base code. Specifically, we characterize the combinatorial constraints on the classical quasi-cyclic base matrices that guarantee construction of degenerate LP-QLDPC codes. Minimal examples and illustrations are provided to demonstrate the usefulness and effectiveness of the code construction approach. The row and column partition constraints derived in the paper simplify the design of degenerate LP-QLDPC codes and can be incorporated into existing classical and quantum code design approaches.
\end{abstract}

\section{Introduction}
\emph{Sparse stabilizer codes} -- referred to generally as quantum low-density parity-check (QLDPC) codes -- have recently risen to the forefront of quantum error correction (QEC) research, both in terms of theoretical approaches and practical implementation aspects\cite{panteleev_asymptotically_2021,  nature_qldpc_demonstration, bravyi2023_IBM_highthreshold}. 
The shift in focus to implementing these non-local codes enabled to devise novel strategies for utilizing QLDPC codes in the hardware architectures\cite{bravyi2023_IBM_highthreshold,IBM_2024_LinearAncillaLogicalMeasurementsBBCode,IBM_2024_Cross_improvedqldpcsurgerylogical,xu2024_Fast_Parallel_LogicalComputation_Homological_Product,bluvstein2023Reconfig_Atom_Exp,Matching_GB_NeutralAtoms_2024}. By allowing the qubits to be connected in multiple layers and reshuffling them, stabilizer measurements that require physically long-range connections among qubits can be realized.
However, most of the theoretical focus on QLDPC code construction has mainly been on asymptotic analysis, particularly with the emphasis on the linear minimum distance scaling and good expansion properties \cite{dinur2023good,gu2024singleshotQLDPC}.
At the same time, hardware implementation demands shorter finite-length codes with properties such as high minimum distance, high code rate, and ease of hardware implementation of the syndrome circuitry. 
The overarching goal of this work is to provide a systematic way to design practical finite-length QLDPC codes without sacrificing the quantum minimum distance. 
In particular, our code construction methodology pertains to an important class of QLDPC codes obtained starting from classical LDPC codes using the recent Kronecker product-based constructions, most notably \emph{Lifted Product} (LP) codes. 
These LP-QLDPC codes proposed and designed by Panteleev and Kalachev had excellent code properties both in terms of minimum distance and code rate and promise for implementation~\cite{panteleev_degenerate_2021,panteleev_asymptotically_2021}.
This is an important subclass of Calderbank-Shor-Steane (CSS) stabilizer codes 
\cite{calderbank1996quantum_exists,Gottesman97}, because many quantum code families may be viewed as special cases or modifications of the LP-QLDPC codes. Furthermore,  these codes are easy to design from existing classical codes and demonstrate good error correction performance if one can exploit the symmetry and structure of LP-QLDPC codes to the benefit of iterative decoding as shown in \cite{Brest_2023_NN_Quantum_Parallel}. 

One of the drawbacks of existing LP-QLDPC constructions is the use of two sequential and independent steps: the choice of the classical base-LDPC code and the Kronecker product followed by the lifting step. 
However, these two steps of the code design should not be separated, and choosing a good base-LDPC code does not necessarily ensure that the resulting LP-QLDPC code has the desired properties, especially in terms of minimum distance. 
While many choices of parity check matrices of such codes show excellent decoding performance with a low probability of logical error after decoding, the original description of LP-QLDPC codes is generic and does not guarantee minimum distance for the constructed quantum codes.
For instance, using a single base matrix $B_1$ = $B_2 = B$ based on a classical code $\mathcal{C}$ with distance $d_{\text{min}}^\mathcal{C}$, we obtain a symmetric LP-QLDPC code $\mathcal{Q}$ $[[N,K,d_{\text{min}}^\mathcal{Q}]]$ having $d_{\text{min}}^\mathcal{Q}$. However, the LP-QLDPC code design does not guarantee that $d_{\text{min}}^\mathcal{Q} = d_{\text{min}}^\mathcal{C}$. 

In this paper, by properly choosing the base-LDPC matrix entries, we can give certain guarantees for the quantum minimum distance such that the LP-QLDPC code is degenerate - minimum distance exceeds the weight of stabilizer generators. When this condition holds, the code is said to be \textit{degenerate}. We obtain constraints wherein the LP construction is guaranteed to have logical codewords of lower Hamming weights, especially those with weight equal to the stabilizer weight, which is highly detrimental.
These low-weight codewords are the dominant cause of undetectable errors. 
A typical characteristic behavior among these spuriously appearing codewords is that they span the entire support of the LP-QLDPC code, similar to the stabilizers. 
We exploit this behavior to analytically evaluate the constraints on the base matrix and avoid such instances in our code design. The base codes chosen are quasi-cyclic LDPC codes from circulant permutation matrices (CPMs).  
Though quasi-cyclic LDPC code constructions are well studied in classical error correction, our effort marks the first for characterizing combinatorial constraints for the LP-QLDPC codes.

The paper is organized as follows: Section \ref{sec_Prelims} sets the preliminaries for understanding quasi-cyclic LDPC codes, stabilizer codes, and their logical operators; in Section \ref{sec_QLDPC_Code}, we describe the LP-QLDPC construction used in this work and their base matrix structure. In Section \ref{sec:MinDistLPCodes}, we describe the constraints on the classical base matrix required to ensure that the constructed LP-QLDPC code is degenerate. Section \ref{sec_CodeDesign} illustrates the advantages of our approach and Section \ref{sec:Conclusion} concludes the paper with a discussion on future research directions.

\section{Preliminaries}
\label{sec_Prelims}

\subsection{Quasi-Cyclic LDPC Codes}
\label{subsec_QC_LDPC_Codes}
Quasi-cyclic (QC) LDPC codes have been extensively studied in classical error correction and are incorporated into contemporary communication standards, such as IEEE 802.16e, IEEE 802.11n, and 5G~\cite{Fossorier04, Divsalar06, Smarandache12}. The quasi-cyclic structure of these codes facilitates low-complexity encoding and decoding processes, making them highly suitable for hardware implementation. 
The parity check matrix of a quasi-cyclic binary LDPC is formed of $L \times L$ blocks where each block is either an all-zero matrix or a circulant permutation matrix (CPM) with $L$ as the circulant size. 
A CPM is a structured matrix representing a permutation derived from a cyclic group of order $L$. 
More specifically, a CPM is defined as $\alpha^i$, the power of a primitive element of a cyclic group, where $i \in \{0,\ldots, L-1 \}$. 
$\alpha^i$ represents an $L \times L$ identity matrix $\mathbf{I}_L$ whose rows are each circularly shifted by $i$ positions to the left.
By convention, to represent the all-zero $L \times L$ matrix, we use $\alpha^{-\infty}=0$. 
The first column in $\alpha^i$ has a one in the $i$-th position and all other entries being zero, the second column has a one in the $ (i+1)$-th position, and so on, wrapping around cyclically to obtain the $L \times L$ 
 matrix. For $L = 3$, we have
$$\alpha^0 = \begin{bmatrix}
1 & 0 & 0 \\
0 & 1 & 0 \\
0 & 0 & 1
\end{bmatrix}, \quad
\alpha^1 = \begin{bmatrix}
0 & 0 & 1 \\
1 & 0 & 0 \\
0 & 1 & 0
\end{bmatrix}, \quad
\alpha^2 = \begin{bmatrix}
0 & 1 & 0 \\
0 & 0 & 1 \\
1 & 0 & 0
\end{bmatrix}, \quad
\alpha^{-\infty} = \begin{bmatrix}
0 & 0 & 0 \\
0 & 0 & 0 \\
0 & 0 & 0
\end{bmatrix}.
$$
One can express the PCM using the base matrix $\mathbf{B}$ in CPM form as in left of Eq. \eqref{eq:B_and_H} or a more convenient way using only exponents as the matrix entries $b_{i,j} \in \{\minf, 0, 1, \ldots, L-1\}$ in the right of \eqref{eq:B_and_H}. The corresponding integers represent the CPM values and $\minf$ is for the all-zero block matrix if any.

\begin{equation}
\mathbf{B} = \begin{bmatrix}
\alpha^{b_{1,1}} & \alpha^{b_{1,2}} & \cdots & \alpha^{b_{1,n}} \\
\alpha^{b_{2,1}} & \alpha^{b_{2,2}} & \cdots & \alpha^{b_{2,n}} \\
\vdots & \vdots & \ddots & \vdots \\
\alpha^{b_{m,1}} & \alpha^{b_{m,2}} & \cdots & \alpha^{b_{m,n}}
\end{bmatrix}_{m \times n}, \quad
B = \begin{bmatrix}
b_{1,1} & b_{1,2} & \cdots & b_{1,n} \\
b_{2,1} & \ddots & \ddots & \vdots \\
\vdots & \ddots & \ddots & \vdots \\
b_{m,1} & \cdots & \cdots & b_{m,n}
\end{bmatrix}_{m \times n}.
\label{eq:B_and_H}
\end{equation}

\emph{Conjugate Transpose of Base Matrix}: Define $\mathbf{B}^*$ as the conjugate transpose of the base matrix $\mathbf{B}$ by replacing the corresponding entries $\alpha^{b_{i,j}}$ with the conjugate values $\alpha^{(L- b_{i,j})\mod L}$, in the transposed base matrix. The minus $-$ operation is performed modulo-$L$ and for the $b_{i,j} = -\infty$ is defined to remain unchanged as $L - (-\infty) = -\infty.$ Hence, 
\begin{equation}
B^* = \begin{bmatrix}
L - b_{1,1} & L - b_{2,1} & \cdots & L - b_{m,1} \\
L - b_{1,2} & \ddots & \ddots & \vdots \\
\vdots & \ddots & \ddots & \vdots \\
L - b_{1,n} & \cdots & \cdots & L - b_{n,m}
\end{bmatrix}_{n \times m}.
\end{equation}
The process of obtaining a parity-check matrix from such a base matrix is called \emph{lifting}, denoted by the mapping $\mathbbm{L}(.)$ from the base matrix $\mathbf{B}$ to the binary PCM. After lifting, the binary matrix $\mathbf{H} = \mathbbm{L}(\mathbf{B})$ has $L \times m$ rows and $L \times n$ columns. 

\begin{ex}
Let us look at a parity check base matrix of a QC-LDPC code (same code as provided in Example 2 in \cite{Tanner_QC}) with circulant size $L=7$ and base matrix 
\begin{equation}
    \label{Eq:Example1BaseMX}
B=\begin{bmatrix}
1&2&4\\
6&5&3
\end{bmatrix}.
\end{equation}
Upon lifting, we obtain the PCM of the associated code $\mathbf{H} = \mathbbm{L}(\mathbf{B})$ with $14$ rows and $21$ columns. The QC-LDPC code $\mathcal{C}: [21,8,6]$ encodes $8$ bits into $21$ bits with a minimum distance $d_\text{min}^{\mathcal{C}} = 6$. 
\end{ex}

In our analysis of the minimum distances of LP-QLDPC codes, we look closely at base matrices whose CPM values are all integers as in the example above. Since every column and row contains $m$ and $n$ integers, these codes belong to the fully connected $(m,n)$ regular QC-LDPC codes. The above example is for a (2,3) regular QC-LDPC code. More generally, we can build LP-QLDPC codes from protograph-based codes (of various types) with multiple edges - bringing more flexibility to the code design. We discuss mainly type-1 protograph codes \cite{arrayCode_fan} - which allows for up to weight-1 circulants. 
These type-1 protographs are popular due to their simplicity and straightforward structure, which makes the code easier to analyze and implement. In \cite{MD99}, the upper bound on the minimum distance of such codes is obtained as $(m + 1)!$, followed by the necessary condition to reach this bound in \cite{Fossorier04}. These bounds are generalized (Theorems 7 and 8 in \cite{Smarandache12}) for the case of other QC LDPC codes. 
In this work, we will explore the type-1 protograph-based LP-QLDPC code constructions and leave the more general protographs for future work.

\subsection{Quantum Stabilizer Codes}
An $[[N,K]]$ quantum stabilizer code \cite{Gottesman97} is a $2^K$-dimensional subspace $\mathcal{Q}$ of the Hilbert space $(\mathbb{C}^2)^{\otimes N}$ given by the common $+1$ eigenspace of the stabilizer group $\mathcal{S}$, i.e.,
\begin{equation} \label{eq:StabilizerCode} \mathcal{Q} = \{\ket{\psi} \in \mathbb{C}^{2^N} \colon \mathbf{S}_i\ket{\psi}= \ket{\psi}, \forall i, \mathbf{S}_i \in \mathcal{S} \}. \end{equation}
For \emph{Calderbank-Shor-Steane} (CSS) stabilizer codes, the stabilizer generators $\mathbf{S}_i, \forall i \in [M]$ are the $N$-fold tensor product of the operators either from the subset $\{\mathbf{I}, \mathbf{X}\}$ or from the subset $\{\mathbf{I},\mathbf{Z}\}$ of the set of Pauli matrices $\{\mathbf{I}, \mathbf{X}, \mathbf{Z}, \mathbf{Y}\}$~\cite{calderbank1996quantum_exists,Steane-physreva96}. In this work, we focus on the constructions of quantum LDPC codes that fall under the CSS category.  

Every (unsigned) Pauli operator can be mapped to a binary tuple as follows: $\mathbf{I} \mapsto (0,0),\; \mathbf{X} \mapsto (1,0),\;\mathbf{Z} \mapsto (0,1),\;\mathbf{Y} \mapsto (1,1)$.
Thus, Pauli operators on $N$ qubits have a length $2N$ binary vector representation, enabling us to map every element of the stabilizer group $\mathcal{S}$  (modulo global phases) into its corresponding binary form. 
This mapping gives a matrix representation of the CSS code stabilizer generators called the parity-check matrix (PCM) $\mathbf{H}$, which is given by 
\begin{equation}
\mathbf{H} = 
\begin{bmatrix}
\mathbf{H}_{\mathrm{X}} & \mathbf{0}\\
\mathbf{0} & \mathbf{H}_{\mathrm{Z}}
\end{bmatrix},
\end{equation}
where $\mathbf{H}_{\mathrm{X}}$ and $\mathbf{H}_{\mathrm{Z}}$ represent  $M_\mathrm{X} \times N$ and  $M_\mathrm{Z} \times N $ binary matrices to correct phase-flip, $\mathrm{Z}$ and bit-flip, $\mathrm{X}$ errors, respectively. The commutativity property that stabilizers satisfy in Pauli notation is equivalent to the symplectic inner product in the binary representation \cite{Nielsen}. Therefore, using the transpose of a matrix $\mathbf{A}$ denoted by $\mathbf{A}^{\mathsf{T}}$, the PCMs of the CSS codes should satisfy  
\begin{equation}
\mathbf{H}_{\mathrm{X}}\left(\mathbf{H}_{\mathrm{Z}}\right)^{\mathsf{T}}=\mathbf{0}. 
\label{eq:OrthogonalPCs}
\end{equation}

A generating set of the non-trivial logical Pauli operators - those that are orthogonal to both $\mathbf{H}_{\mathrm{X}}$ and $\mathbf{H}_{\mathrm{Z}}$ parity check rows, can be determined using the method of Section 10.5.7 of \cite{Nielsen}. By doing Gaussian elimination and permuting the columns/qubits, we can obtain the parity check matrix in standard or systematic form and then get the generating set. 
Essentially, with the method, we have $K$ logical qubits when we identify $K$ independent logical $X$ generators and $K$ independent logical $Z$ generators. The $K$ pairs of logical generators are such that they are only not orthogonal to each other. The dimension of the CSS code is thus $K = N - rank(\mathbf{H}_{\mathrm{X}}) - rank(\mathbf{H}_{\mathrm{Z}}).$
Arranged as $\mathbf{L}_{\mathrm{X}}$ and $\mathbf{L}_{\mathrm{Z}}$ matrices, they exactly overlap only at the $K$ positions ($\mathbf{L}_{\mathrm{Z}} \mathbf{L}_{\mathrm{X}}^\mathsf{T} = \mathbf{I}_K$) and also are orthogonal to the respective parity check matrix, i.e.,
$\mathbf{L}_{\mathrm{Z}} \mathbf{H}_{\mathrm{X}}^\mathsf{T} = \mathbf{0}$ and $\mathbf{L}_{\mathrm{X}} \mathbf{H}_{\mathrm{Z}}^\mathsf{T} = \mathbf{0}$.
A general recipe to obtain such pairs of PCMs is to use two classical dual-containing codes $\mathcal{C}_{\mathrm{X}}$ and $\mathcal{C}_{\mathrm{Z}}$ with the respective PCMs $\mathbf{H}_{\mathrm{X}}$ and $\mathbf{H}_{\mathrm{Z}}$, respectively, such that the dual ($\mathcal{C}^\perp$) of one code is contained in the other code, i.e., 
\begin{align}
    \mathcal{C}_{\mathrm{X}}^\perp \subseteq \mathcal{C}_{\mathrm{Z}},\\
    \mathcal{C}_{\mathrm{Z}}^\perp \subseteq \mathcal{C}_{\mathrm{X}}.
\end{align}
Here, $\mathcal{C}_{\mathrm{X}}^{\perp}$ ($\mathcal{C}_{\mathrm{Z}}^{\perp}$) denotes the dual code whose generator matrix is $\mathbf{H}_{\mathrm{X}}$ ($\mathbf{H}_{\mathrm{Z}}$).
The logical operators $\mathbf{\bar{X}}$ ($\mathbf{\bar{Z}}$) for the CSS codes are given by $\mathcal{L}_{\mathrm{X}}=\mathcal{C}_{\mathrm{X}}/\mathcal{C}_{\mathrm{Z}}^{\perp}$ ($\mathcal{L}_{\mathrm{Z}}=\mathcal{C}_{\mathrm{Z}}/\mathcal{C}_{\mathrm{X}}^{\perp}$). The dual-containing property on classical codes $\mathcal{C}_\text{X}$ and $\mathcal{C}_\text{Z}$, $\mathcal{C}_\text{X}^\perp \subseteq \mathcal{C}_\text{Z}$ implies that the generator matrices in the classical sense are in the form $\mathbf{G}_\mathrm{Z}=\left[\begin{smallmatrix} \mathbf{H}_{\mathrm{X}}\\ \mathbf{L}_{\mathrm{X}} \end{smallmatrix}\right]$. 
 The fact $\mathbf{G}_\mathrm{Z} \mathbf{H}_\mathrm{Z}^\mathsf{T}=\mathbf{0}$ implies $\mathbf{H}_\mathrm{X}\mathbf{H}_\mathrm{Z}^\mathsf{T}=\mathbf{0}$ and $\mathbf{L}_\mathrm{X}\mathbf{H}_\mathrm{Z}^\mathsf{T}=\mathbf{0}$. 
 First, error patterns $\in \text{rowspace} (\mathbf{H}_\mathrm{X})$ correspond to $\text{X}$-stabilizers and do not harm the decoder. In contrast, error patterns $\mathbf{\ell} \in \text{rowspace}(\mathbf{L}_\mathrm{X})$ are codewords. They correspond to $\text{X}$-logical errors (and thus will also be referred to as \emph{logical codewords}). 
 The minimum distance is determined by the lowest Hamming weight of respective logical codewords. Here, the weight of a Pauli operator is defined as the number of ones (support) for a non-identity codeword. The overall minimum distance of a CSS code is given by $d = \min(d_{\mathrm{X}},d_{\mathrm{Z}})$. If the minimum distance is strictly greater than the stabilizer generator weight, then the QLDPC code is referred to as a degenerate code, otherwise, it is a non-degenerate code.
More generally, if the combined PCM is of rank $N-K$ and both $\mathbf{L}_{\mathrm{X}}$ and $\mathbf{L}_{\mathrm{Z}}$ logical matrices are of rank $K$ each, there are $2^{N-K}(4^K-1) = \mathcal{O}(2^{N+K})$ logical codewords~\cite{webster2024CodeConstruction_Evolutionary}. Since the number of non-trivial logical operators grows exponentially in $N + K$; finding the minimum distance of stabilizer codes is computationally intractable and an NP-hard problem \cite{Vardy97,Pryadko_DistVerify_QLDPC,kapshikar_QminD_hardness_2023}. Hence, we acknowledge that completely characterizing the structure and distribution of the low-weight codewords is a daunting challenge.   
\section{QLDPC Code Construction}
\label{sec_QLDPC_Code}
The CSS construction provides a natural way to build quantum codes from good classical codes. One such method is to take the Kronecker product of classical PCMs to obtain hypergraph product (HGP), balanced product, or lifted product (LP) codes. In this paper, we will focus on the LP-QLDPC codes obtained from fully connected type-1 protograph classical LDPC codes.  
\subsection{Lifted Product Quantum LDPC Codes}
\label{subsec_LPCodes}
Consider the construction of quasi-cyclic lifted product quantum LDPC codes~\cite{panteleev2022quantumAlmostLinearMinD} using two classical quasi-cyclic base matrices $\mathbf{B}_1$ and $\mathbf{B}_2$ (as in Eq. \eqref{eq:B_and_H} in their CPM form) of size $m_1 \times n_1$  and  $m_2 \times n_2$, respectively: 
\begin{align}\begin{split}
    \mathbf{B}_{\mathrm{X}} & = \begin{bmatrix}
        \mathbf{B}_1 \;\;\otimes \mathbf{I}_{n_2} & \mathbf{I}_{m_1} \otimes \mathbf{B}_2^{\mathsf{*}}
    \end{bmatrix},\\
    \mathbf{B}_{\mathrm{Z}} & = \begin{bmatrix}
       \mathbf{I}_{n_1} \otimes \mathbf{B}_2 &  \; \; \mathbf{B}_1^{\mathsf{*}} \; \; \;\otimes  \mathbf{I}_{m_2}   \\
    \end{bmatrix}.
\label{eq:lifted_product_base_matrices}
\end{split}
\end{align}

The corresponding bit-flip and phase-flip PCMs $\mathbf{H_{\mathrm{X}}} = \mathbbm{L}(\mathbf{B}_{\mathrm{X}})$ and $\mathbf{H_{\mathrm{Z}}}  = \mathbbm{L}(\mathbf{B}_{\mathrm{Z}})$ are obtained by replacing the integer terms with the corresponding CPMs and -$\infty$ with zero matrices. 
They are indeed a generalization of hypergraph-product codes, which have a trivial lift operation with circulant size $L = 1$, and the base matrix $\mathbf{B}$ with $\alpha^{-\infty}=0$ and $\alpha^0=1$, thus giving the binary parity check matrix of the classical code. 

The symplectic inner product condition is also satisfied enforcing that $\mathbf{H_{\mathrm{X}}}$ and $\mathbf{H_{\mathrm{Z}}}$ parity check matrices are orthogonal, i.e., $\mathbf{H_{\mathrm{X}}}$ $\mathbf{H_{\mathrm{Z}}}^\textsc{T} = \mathbf{0}.$ 
Hence, the rowspace of $\mathbf{H_{\mathrm{X}}}$ essentially forms the stabilizer codewords with respect to $\mathbf{H_{\mathrm{Z}}}$ and vice-versa.
Using a single base matrix, i.e., $\mathbf{B}_1$ = $\mathbf{B}_2 = \mathbf{B}$ of size $m \times n$, we get the symmetric LP-QLDPC codes as in Eq. \eqref{eq:symmLP_base_matrices}. 

\begin{align}\begin{split}
    \mathbf{B}_{\mathrm{X}} & = \begin{bmatrix}
        \mathbf{B} \;\;\otimes \mathbf{I}_{n} & \mathbf{I}_{m} \otimes \mathbf{B}^{\mathsf{*}}
    \end{bmatrix},\\
    \mathbf{B}_{\mathrm{Z}} & = \begin{bmatrix}
       \mathbf{I}_{n} \otimes \mathbf{B} &  \; \; \mathbf{B}^{\mathsf{*}} \; \; \;\otimes  \mathbf{I}_{m}   \\
    \end{bmatrix}.
\label{eq:symmLP_base_matrices}
\end{split}
\end{align}
The corresponding parity check matrices $\mathbf{H_{\mathrm{X}}}$ and $\mathbf{H_{\mathrm{Z}}}$ each have $n_{\mathbf{H_\mathrm{X}}} = n_{\mathbf{H_\mathrm{Z}}} = L(n^2 + m^2)$ columns and $m_{\mathbf{H_\mathrm{X}}} = m_\mathbf{{H_\mathrm{Z}}} = L(nm)$ rows. 
Let us denote the classical code corresponding to the base matrix as $\mathcal{C}$ and the lifted product code obtained from the base matrix as $\mathcal{Q}$ with parameters $[L \times n,k,d_{\text{min}}^\mathcal{C}]$ and $[[N,K,d_{\text{min}}^\mathcal{Q}]]$, respectively. 

\subsection{Base Matrix of an LP-QLDPC Code}
Consider a base matrix (in exponent form) of a classical QC LDPC code with circulant size $L$, $m = 2$, and $n = 3$. 
\begin{equation}
\begin{split}
B =
\begin{bmatrix}
b_{1,1}&b_{1,2}&b_{1,3}\\
b_{2,1}&b_{2,2}&b_{2,3}
\end{bmatrix}
\end{split}
\label{eq:B_2_3}
\end{equation}
and 
\begin{equation}
\begin{split}
B^*=
\begin{bmatrix}
b_{1,1}^*&b_{2,1}^*\\
b_{1,2}^*&b_{2,2}^*\\
b_{1,3}^*&b_{2,3}^*
\end{bmatrix}
\end{split}
\label{eq:B*}
\end{equation}
where $b_{i,j} + b_{i,j}^* = 0, \mod L.$ A symmetric LP-QLDPC code using base matrix $B$ has the following representation in their simplified notation with CPM values: 
\begin{align}
B_{\mathrm{X}} & = 
\begin{bmatrix}
        B \otimes I_3 & I_2 \otimes B^{\mathsf{*}}  
\end{bmatrix},
\label{eq:BX}
\end{align} 
and 
\begin{align}
B_{\mathrm{Z}} & = \begin{bmatrix}
       {I}_{3} \otimes B & B^{\mathsf{*}} \otimes  {I}_2
    \end{bmatrix}.
    \label{eq:BZ}
\end{align} 
We have the left part of $B_{\mathrm{Z}}$ in Eq. \eqref{eq:BZ} as 
\begin{equation}
\arrayrulecolor{gray!50}
I_3 \otimes B =\left[
\begin{array}{;{2pt/2pt}c;{2pt/2pt}c;{2pt/2pt}c;{2pt/2pt}c;{2pt/2pt}c;{2pt/2pt}c;{2pt/2pt}c;{2pt/2pt}c;{2pt/2pt}c;{2pt/2pt}}
\hdashline[2pt/2pt]
b_{1,1} & b_{1,2} & b_{1,3} &
    &     &     &
    &     &     \\ \hdashline[2pt/2pt]
b_{2,1} & b_{2,2} & b_{2,3} &
    &     &     &
    &     &     \\ \hdashline[2pt/2pt]
    &     &     &
b_{1,1} & b_{1,2} & b_{1,3} &
    &     &     \\ \hdashline[2pt/2pt]
    &     &     &
b_{2,1} & b_{2,2} & b_{2,3} &
    &     &     \\ \hdashline[2pt/2pt]
    &     &     &
    &     &     &
b_{1,1} & b_{1,2} & b_{1,3} \\ \hdashline[2pt/2pt]
    &     &     &
    &     &     &
b_{2,1} & b_{2,2} & b_{2,3} \\ \hdashline[2pt/2pt]
  \end{array}
\right],
\label{Eq:BZ_left}
\end{equation}
where for simplicity, we denote the $\minf$ powers in the matrix array with empty cells. Observe that the left part of $B_{\mathrm{Z}}$ is structured as the $n$-fold concatenation of the base code. The right part of $B_{\mathrm{Z}}$ in Eq. \eqref{eq:BZ} is 
\begin{equation}
\arrayrulecolor{gray!50}
B^* \otimes I_2 =
\left[
\begin{array}{;{2pt/2pt}c;{2pt/2pt}c;{2pt/2pt}c;{2pt/2pt}c;{2pt/2pt}c;{2pt/2pt}}
\hdashline[2pt/2pt]
b_{1,1}^* &           & b_{2,1}^* &      \\ \hdashline[2pt/2pt]
          & b_{1,1}^* &           & b_{2,1}^* \\ \hdashline[2pt/2pt]
b_{1,2}^* &           & b_{2,2}^* &      \\ \hdashline[2pt/2pt]
          & b_{1,2}^* &           & b_{2,2}^* \\ \hdashline[2pt/2pt]
b_{1,3}^* &           & b_{2,3}^* &      \\ \hdashline[2pt/2pt]
          & b_{1,3}^* &           & b_{2,3}^* \\ \hdashline[2pt/2pt]
\end{array}
\right],
\label{Eq:BZ_right}
\end{equation}
where one can identify the transposed conjugate base matrix interleaved within the blocks. For this reason, we refer to the left part in Eq. \eqref{eq:BZ} as the `code' part of $B_{\mathrm{Z}}$ and the right part as the `transpose' part of $B_{\mathrm{Z}}$. 
Similarly, the left part of $B_{\mathrm{X}}$ in Eq. \eqref{eq:BX} also has the classical code base matrix interleaved within as
\begin{equation}
\arrayrulecolor{gray!50}
B \otimes I_3  =\left[
\begin{array}{;{2pt/2pt}c;{2pt/2pt}c;{2pt/2pt}c;{2pt/2pt}c;{2pt/2pt}c;{2pt/2pt}c;{2pt/2pt}c;{2pt/2pt}c;{2pt/2pt}c;{2pt/2pt}}
\hdashline[2pt/2pt]
b_{1,1} &         &         & b_{1,2} &         &        & b_{1,3} &         &         \\ \hdashline[2pt/2pt]
        & b_{1,1} &         &         & b_{1,2} &        &         & b_{1,3} &         \\ \hdashline[2pt/2pt]
        &         & b_{1,1} &         &         & b_{1,2}&         &         & b_{1,3} \\ \hdashline[2pt/2pt]
b_{2,1} &         &         & b_{2,2} &         &        & b_{2,3} &         &         \\ \hdashline[2pt/2pt]
        & b_{2,1} &         &         & b_{2,2} &        &         & b_{2,3} &         \\ \hdashline[2pt/2pt]
        &         & b_{2,1} &         &         & b_{2,2}&         &         & b_{2,3} \\ \hdashline[2pt/2pt]
\end{array}
\right],
\label{Eq:BX_left}
\end{equation}
and the right part of $B_{\mathrm{X}}$ in Eq. \eqref{eq:BX} is composed as 
\begin{equation}
\arrayrulecolor{gray!50}
I_2 \otimes B^* =
\left[
  \begin{array}{;{2pt/2pt}c;{2pt/2pt}c;{2pt/2pt}c;{2pt/2pt}c;{2pt/2pt}c;{2pt/2pt}}
\hdashline[2pt/2pt]
b_{1,1}^* & b_{2,1}^* &  & \\ \hdashline[2pt/2pt]
b_{1,2}^* & b_{2,2}^* &  &  \\ \hdashline[2pt/2pt]
b_{1,3}^* & b_{2,3}^* &     &     \\ \hdashline[2pt/2pt]
    &     & b_{1,1}^* & b_{2,1}^* \\ \hdashline[2pt/2pt]
    &     & b_{1,2}^* & b_{2,2}^* \\ \hdashline[2pt/2pt]
    &     & b_{1,3}^* & b_{2,3}^* \\ \hdashline[2pt/2pt]
  \end{array}
\right],
\label{Eq:BX_right}
\end{equation}
where one can notice the $m$-fold concatenation of the transposed conjugate base matrix. Together, we have the matrices corresponding $B_\mathrm{Z}$ and $B_\mathrm{X}$ as:
\begin{equation}
\arrayrulecolor{gray!50}
B_\mathrm{X}=\left[
\begin{array}{;{2pt/2pt}c;{2pt/2pt}c;{2pt/2pt}c;{2pt/2pt}c;{2pt/2pt}c;{2pt/2pt}c;{2pt/2pt}c;{2pt/2pt}c;{2pt/2pt}c;{2pt/2pt} ;{2pt/2pt}c;{2pt/2pt}c;{2pt/2pt}c;{2pt/2pt}c;{2pt/2pt}c;{2pt/2pt}}
\hdashline[2pt/2pt]
b_{1,1} &  &  & b_{1,2} &  &  & b_{1,3} &  &  & b_{1,1}^* & b_{2,1}^* &  &  \\ \hdashline[2pt/2pt]
  & b_{1,1} &  &  & b_{1,2} &  &  & b_{1,3} &  & b_{1,2}^* & b_{2,2}^* &  &  \\ \hdashline[2pt/2pt]
  &  & b_{1,1} &  &  & b_{1,2} &  &  & b_{1,3} & b_{1,3}^* & b_{2,3}^* &  &  \\ \hdashline[2pt/2pt]
b_{2,1} &  &  & b_{2,2} &  &  & b_{2,3} &  &  &  &  & b_{1,1}^* & b_{2,1}^* \\ \hdashline[2pt/2pt]
  & b_{2,1} &  &  & b_{2,2} &  &  & b_{2,3} &  &  &  & b_{1,2}^* & b_{2,2}^* \\ \hdashline[2pt/2pt]
  &  & b_{2,1} &  &  & b_{2,2} &  &  & b_{2,3} &  &  & b_{1,3}^* & b_{2,3}^* \\ \hdashline[2pt/2pt]
\end{array}
\right],
\label{Eq:BX_leftright}
\end{equation}

\begin{equation}
\arrayrulecolor{gray!50}
B_\mathrm{Z}=\left[
\begin{array}{;{2pt/2pt}c;{2pt/2pt}c;{2pt/2pt}c;{2pt/2pt}c;{2pt/2pt}c;{2pt/2pt}c;{2pt/2pt}c;{2pt/2pt}c;{2pt/2pt}c;{2pt/2pt} ;{2pt/2pt}c;{2pt/2pt}c;{2pt/2pt}c;{2pt/2pt}c;{2pt/2pt}c;{2pt/2pt}}
\hdashline[2pt/2pt]
b_{1,1} & b_{1,2} & b_{1,3} &  &  &  &  &  &  & b_{1,1}^* &  & b_{2,1}^* &   \\ \hdashline[2pt/2pt]
b_{2,1} & b_{2,2} & b_{2,3} &  &  &  &  &  &  &  & b_{1,1}^* &  & b_{2,1}^*  \\ \hdashline[2pt/2pt]
  &  &  & b_{1,1} & b_{1,2} & b_{1,3} &  &  &  & b_{1,2}^* &  & b_{2,2}^* &    \\ \hdashline[2pt/2pt]
  &  &  & b_{2,1} & b_{2,2} & b_{2,3} &  &  &  &  &  b_{1,2}^* &  & b_{2,2}^* \\ \hdashline[2pt/2pt]
  &  &  &  &  &  & b_{1,1} & b_{1,2} & b_{1,3} & b_{1,3}^* &  & b_{2,3}^* &    \\ \hdashline[2pt/2pt]  
  &  &  &  &  &  & b_{2,1} & b_{2,2} & b_{2,3} &  & b_{1,3}^* &  & b_{2,3}^*  \\ \hdashline[2pt/2pt]
\end{array}
\right].
\label{Eq:BZ_leftright}
\end{equation}

The girth $g$, defined as the shortest cycle length in its Tanner graph, is limited to $g=8$ for an LP-QLDPC code. There are length-8 cycles which are unavoidable as they are a result of the LP construction itself.
The constraint for cycles in terms of the base matrix given in \cite{arrayCode_fan,Fossorier04} is always satisfied in LP-QLDPC codes for the existence of length-8 cycles. For instance, in Eq. \eqref{Eq:BX_leftright}, 
$b_{1,1}  - b_{1,1}^*  + b_{1,2}^* - b_{1,1} + b_{2,1} - b_{1,2}^* + b_{1,1}^* - b_{2,1} \mod L = 0$. 
This can be easily generalized to identify unavoidable length-8 cycles that loop back around starting from the left/code part to the right/transpose side twice \cite{raveendran2021trapping}. 

\textit{Example 1 Continued: }
Using $B$ matrix in Eq: \eqref{Eq:Example1BaseMX} as the base matrix, the corresponding LP-QLDPC code is obtained as 
\begin{equation}
\arrayrulecolor{gray!50}
B_\mathrm{X}=\left[
\begin{array}{;{2pt/2pt}c;{2pt/2pt}c;{2pt/2pt}c;{2pt/2pt}c;{2pt/2pt}c;{2pt/2pt}c;{2pt/2pt}c;{2pt/2pt}c;{2pt/2pt}c;{2pt/2pt} ;{2pt/2pt}c;{2pt/2pt}c;{2pt/2pt}c;{2pt/2pt}c;{2pt/2pt}c;{2pt/2pt}}
\hdashline[2pt/2pt]
1 &  &  & 2 &  &  & 4 &  &  & 6 & 1 &  &  \\ \hdashline[2pt/2pt]
  & 1 &  &  & 2 &  &  & 4 &  & 5 & 2 &  &  \\ \hdashline[2pt/2pt]
  &  & 1 &  &  & 2 &  &  & 4 & 3 & 4 &  &  \\ \hdashline[2pt/2pt]
6 &  &  & 5 &  &  & 3 &  &  &  &  & 6 & 1 \\ \hdashline[2pt/2pt]
  & 6 &  &  & 5 &  &  & 3 &  &  &  & 5 & 2 \\ \hdashline[2pt/2pt]
  & & 6  &  &  & 5  &  &  & 3 &  &  & 3 & 4 \\ \hdashline[2pt/2pt]
\end{array}
\right],
\label{Eq:BX_leftrightExample}
\end{equation}

\begin{equation}
\arrayrulecolor{gray!50}
B_\mathrm{Z}=\left[
\begin{array}{;{2pt/2pt}c;{2pt/2pt}c;{2pt/2pt}c;{2pt/2pt}c;{2pt/2pt}c;{2pt/2pt}c;{2pt/2pt}c;{2pt/2pt}c;{2pt/2pt}c;{2pt/2pt} ;{2pt/2pt}c;{2pt/2pt}c;{2pt/2pt}c;{2pt/2pt}c;{2pt/2pt}c;{2pt/2pt}}
\hdashline[2pt/2pt]
1 & 2 & 4 &  &  &  &  &  &  & 6 &  & 1 &    \\ \hdashline[2pt/2pt]
6 & 5 & 3 &  &  &  &  &  &  &  & 6 &  & 1  \\ \hdashline[2pt/2pt]
  &  &  & 1 & 2 & 4  &  &  &  & 5 &  & 2 &    \\ \hdashline[2pt/2pt]
  &  &  & 6 & 5 & 3 &  &  &  &  &  5 &  & 2 \\ \hdashline[2pt/2pt]
  &  &  &  &  &  & 1 & 2 & 4 & 3 &  & 4 &    \\ \hdashline[2pt/2pt]  
  &  &  &  &  &  & 6 & 5 & 3 &  & 3 &  & 4  \\ \hdashline[2pt/2pt]
\end{array}
\right],
\label{Eq:BZ_leftrightExample}
\end{equation}

\subsection{Orthogonality Condition for Base Matrices}
\label{sec:OrthogonalityBasematrices}
A necessary and sufficient condition for a pair of QC-LDPC codes to satisfy the symplectic inner product condition is given in \cite{imai_qc_codes}. We revisit the even multiplicity condition described therein first.  
\begin{definition}[Even multiplicity]
A vector \(\mathbf{\mu} = (\mu_1, \mu_2, \ldots, \mu_n)\) is said to have even multiplicity if, for every integer \(k \in \mathbb{Z}\), the number of indices \(i \in \{1, 2, \ldots, n\}\) such that \(\mu_i = k\) is even.
\end{definition} 
By definition, the vector has even multiplicity if each integer entry in $\mathbf{\mu}$ appears an even number of times while discounting the $\minf$ entries or the empty cells if any.

\textit{Even multiplicity and orthogonal base matrices:}
Let $B$ be a base matrix in exponent form of size $m \times n $ with circulant size $L$, where the entries $ b_{i,j} \in \{\minf, 0, 1, \ldots, L-1\}$.
Define the $i$-th row vector $\mathbf{r}_i$ for $i = 1, 2, \ldots, m$ as $    \mathbf{r}_i = \begin{bmatrix} b_{i,1}, & b_{i,2}, & \cdots, & b_{i,n} \end{bmatrix}.$
If the modulo-$L$ difference of any two distinct rows $\mathbf{r}_i$ and $\mathbf{r}_j$: i.e., $\mathbf{r} = \mathbf{r}_i - \mathbf{r}_j$, modulo-$L$, has even multiplicity, then the corresponding binary matrices after lifting $\mathbbm{L}(\mathbf{r}_i)$ and $\mathbbm{L}(\mathbf{r}_j)$ are orthogonal\cite{imai_qc_codes,galindo_quasi-cyclic_2018}. 
The minus $-$ operation is performed modulo-$L$ and for the $\infty$ entries is as follows: $b_{i,j} - \infty = \infty - b_{i,j} = \infty - \infty := -\infty.$ 

\begin{te}
    LP-QLDPC code PCMs $\mathbf{H_{\mathrm{X}}}$ and $\mathbf{H_{\mathrm{Z}}}$ are orthogonal if the rows of quasi-cyclic base matrices $B_{\mathrm{X}}$ and $B_{\mathrm{Z}}$ have even multiplicity.
    \label{Th:EvenMu}
\end{te}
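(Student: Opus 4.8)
The plan is to reduce the orthogonality of the full lifted matrices $\mathbf{H}_{\mathrm{X}}(\mathbf{H}_{\mathrm{Z}})^{\mathsf T}=\mathbf{0}$ to the even-multiplicity criterion already recalled for quasi-cyclic matrices. The key observation is that the product $\mathbf{H}_{\mathrm{X}}(\mathbf{H}_{\mathrm{Z}})^{\mathsf T}$, being itself a quasi-cyclic block matrix with circulant size $L$, vanishes over $\mathbb{F}_2$ if and only if, for every pair consisting of a row block $i$ of $B_{\mathrm{X}}$ and a row block $j$ of $B_{\mathrm{Z}}$, the binary convolution of the lifted $i$-th row of $B_{\mathrm{X}}$ with the lifted $j$-th row of $B_{\mathrm{Z}}$ is zero. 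By the even-multiplicity result quoted just before the theorem (from \cite{imai_qc_codes,galindo_quasi-cyclic_2018}), this holds precisely when, for each such pair $(i,j)$, the ``mixed difference'' vector formed by subtracting (modulo $L$) the exponents of the overlapping nonzero positions of row $i$ of $B_{\mathrm{X}}$ and row $j$ of $B_{\mathrm{Z}}$ has even multiplicity. So the first step is to state this reduction cleanly, noting that it is exactly the per-row-pair version of the scalar criterion applied to the concatenated row vectors $\mathbf r^{\mathrm X}_i = (\text{row }i\text{ of }B_{\mathrm{X}})$ and $\mathbf r^{\mathrm Z}_j = (\text{row }j\text{ of }B_{\mathrm{Z}})$.

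Next I would make the support structure explicit using the block layout of $B_{\mathrm{X}}$ and $B_{\mathrm{Z}}$ spelled out in Eq.~\eqref{eq:lifted_product_base_matrices} (the $\mathbf B_1\otimes\mathbf I$, $\mathbf I\otimes\mathbf B_2$, and $\mathbf B^{*}$ pieces). Writing $\mathbf B_{\mathrm X}=[\mathbf B_1\otimes\mathbf I_{n_2}\ \mid\ \mathbf I_{m_1}\otimes\mathbf B_2^{*}]$ and $\mathbf B_{\mathrm Z}=[\mathbf I_{n_1}\otimes\mathbf B_2\ \mid\ \mathbf B_1^{*}\otimes\mathbf I_{m_2}]$, the columns split into a ``left block'' (indexed by pairs from the $n_1\times n_2$ grid) and a ``right block'' (indexed by pairs from the $m_1\times m_2$ grid). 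A row of $B_{\mathrm X}$ and a row of $B_{\mathrm Z}$ can only overlap within the left block or within the right block, and the overlap pattern on each side is governed by exactly one Kronecker factor having a row (of $\mathbf B_1$, resp.\ $\mathbf B_2$) against a column (of $\mathbf B_2$, resp.\ $\mathbf B_1$) meeting the identity of the other factor. The plan is to show that in the left block the mixed-difference entries are of the form $(b^{(1)}_{i,k}) - (b^{(2)}_{j,\ell}) $ arranged over a rectangular index set, while in the right block they are of the form $(L-b^{(2)}_{j',\ell'}) - (L-b^{(1)}_{i',k'})= b^{(1)}_{i',k'}-b^{(2)}_{j',\ell'}$ over a mirrored index set; the conjugation (the $L-(\cdot)$ in $\mathbf B^{*}$) is precisely what flips the sign so that the left-block contributions and the right-block contributions pair up and cancel in multiplicity.

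The third step is the bookkeeping that the even-multiplicity hypothesis on the rows of $B_{\mathrm X}$ and on the rows of $B_{\mathrm Z}$ (which, for the symmetric single-base-matrix case, is an intrinsic property of $B$ that one checks once) forces each mixed-difference vector for every row-pair $(i,j)$ to have even multiplicity. Concretely: fix exponent value $t\in\{0,\dots,L-1\}$; count how many overlapping coordinates in the left block contribute $t$ and how many in the right block contribute $t$; using the symmetry between the two Kronecker orderings and the sign flip from conjugation, show these two counts have the same parity, so their sum is even. This is the routine-but-careful combinatorial core, and it is where the structural identity $B^{*}_{i,j}=L-B_{i,j}$ and the transpose in the definition of $\mathbf B^{*}$ get used in an essential way. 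I expect the main obstacle to be precisely this index-matching: setting up a parity-preserving bijection between the left-block coordinates producing a given exponent and the right-block coordinates producing it, uniformly over all row-pairs, without getting lost in the four-fold index notation $(i_1,i_2;j_1,j_2)$. Once that bijection is in hand, the conclusion $\mathbf H_{\mathrm X}\mathbf H_{\mathrm Z}^{\mathsf T}=\mathbf 0$ follows immediately from the cited even-multiplicity characterization, completing the proof.
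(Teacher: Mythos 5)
Your proposal is correct and follows essentially the same route as the paper: reduce to the per-row-pair even-multiplicity criterion, then use the conjugation $b^* = L-b$ to show the left-block and right-block difference terms coincide. The index-matching you anticipate as the main obstacle is in fact trivial, since any row of $B_{\mathrm{X}}$ and any row of $B_{\mathrm{Z}}$ overlap in exactly one position in the code part and exactly one in the transpose part, so the difference vector has just two integer entries ($p-q$ and $q^*-p^*=p-q$), which are equal.
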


\textit{Proof:} 
To prove the orthogonality property for quasi-cyclic LP-QLDPC code parity check matrices, one can check the even multiplicity property for the base matrices $B_{\mathrm{X}}$ and $B_{\mathrm{Z}}$. Their rows obey the condition for orthogonality - the even multiplicity of integers in the modulo-$L$ differences of the rows, where $L$ is the circulant size. Take any pair of rows such that $\mathbf{r}_i$ is taken from $B_{\mathrm{X}}$, and $\mathbf{r}_j$ from $B_{\mathrm{Z}}$. 
These rows can be split based on the left part and right part - with exactly one overlap in each of the parts. 
Thus, the difference $\mathbf{r} = \mathbf{r}_i - \mathbf{r}_j$, modulo-$L$ will only have two integer terms. 
Suppose that the two overlapping terms are equal; then, the resulting difference vector has a pair of zeros.
If the overlapping terms $p$ and $q$ are not equal, $p \neq q$, then the corresponding terms in the difference appear as $p - q$ and $q^* - p^*$, respectively. Since the conjugate terms are $q^* = L-q$ and $p^* = L-p$, the overlapping terms are indeed equal, satisfying the even multiplicity condition. Therefore,  the two parity check matrices are orthogonal to each other. \qed

\section{Minimum Distance of LP-QLDPC Codes}
\label{sec:MinDistLPCodes}
In the following, we examine the symmetric LP-QLDPC codes constructed from type-1 quasi-cyclic base matrices to understand how certain combinations of CPM values reduce the quantum minimum distance $d_{\text{min}}^\mathcal{Q}$ from the minimum distance of the base code $d_{\text{min}}^\mathcal{C}$. We first show how the minimum distance is limited to the Hamming weight of the stabilizer generators. We also prove that using a base matrix with only two rows ($m = 2$ as in Eq. \eqref{eq:B_2_3}) always limits the LP-QLDPC code to have a minimum distance $d_{\text{min}}^\mathcal{Q} \le n+2$. In general, we are interested in the following question: What constraints on the base code of an LP-QLDPC code reduce its minimum distance? This will guide us toward code construction recipes guarantee degeneracy for the obtained LP-QLDPC code. First, we provide examples of the choices of base codes of different sizes: $m$ and $n$ to demonstrate the reduction of the minimum distance of LP-QLDPC codes.\\
\textit{Example 1 Continued:} Consider the LP-QLDPC code constructed from the example base matrix we saw earlier with $m = 2, n = 3, L =7$,
\begin{equation}
B=\begin{bmatrix}
1&2&4\\
6&5&3
\end{bmatrix}.
\label{eq: B_2_3}
\end{equation}
We start with a classical LDPC code $\mathcal{C}: [21,8,6]$. Hence, the desired minimum distance of LP code is $d_{\text{min}}^{\mathcal{C}} = 6$. However, we verified that the $d_{\text{min}}^{\mathcal{Q}} = m+n = 5$. 
Note that for quasi-cyclic base matrices, it is enough to look at equivalent base matrices expressed in canonical form. Two base matrices are equivalent if their respective Tanner graphs are identical upon variable and/or check node permutations. Given a base matrix, $B$ exchanging rows/columns, or adding a fixed integer to each element in row/column maintains equivalence. The rows and columns of the quasi-cyclic code can be reordered to put 0s in the first row and column to make an equivalent parity check matrix making the analysis easier. For instance, we have the equivalent matrix \begin{equation}
B=\begin{bmatrix}
0&0&0\\
0&1&3
\end{bmatrix}.
\end{equation}  
\begin{ex}
Classical code $\mathcal{C}: [104,30,14]$ with the base matrix given in Example 3 of \cite{Tanner_QC}. 
We obtain the base matrix for the classical LDPC code $\mathcal{C}: [104,30,14]$. The base matrix $B$ in Eq. \eqref{eq: B_3_4_00} corresponds to a $(3,4)$ regular LDPC code with circulant size $L = 26$. The minimum distance of the base code is $d_{\text{min}}^{\mathcal{C}} = 14$. 
\begin{equation}
B=\begin{bmatrix}
0&0&0&0\\
0&6&4&10\\
0&8&14&22
\end{bmatrix}.
\label{eq: B_3_4_00}
\end{equation}
We now construct a symmetric LP-QLDPC code from Eq. \eqref{eq: B_3_4_00} to get a [[650, 50, 7]] code. Here, we have $N = 26 \times (3^2 + 4^2)$ physical qubits encoding $K = 50$ logical qubits with a minimum distance $d_{\text{min}}^{\mathcal{Q}} = 7$ instead of 14. Note that the quantum minimum distance is again equal to $m + n$.
\end{ex}

\begin{ex} 
Classical code $\mathcal{C}:[65,16,16]$ with the base matrix given in Eq. \eqref{eq: B_4_5} having $m = 4$ rows and $n = 5$ columns with circulant size $L = 13$. Here, the minimum distance is $d_{\text{min}}^{\mathcal{C}} = 16$. 
 
\begin{equation}
\begin{split}
B = \begin{bmatrix}
0&0&0&0&0\\
0&1&11&8&9\\
0&4&5&6&10\\
0&10&6&2&12 	
\end{bmatrix}.
\end{split}
\label{eq: B_4_5}
\end{equation}

We construct a symmetric LP-QLDPC code from $B$ in Eq. \eqref{eq: B_4_5} to make a [[533, 37, 9]] quantum code. Here, we have $N = 13 \times (4^2 + 5^2)$ physical qubits encoding $K = 37$ logical qubits with a minimum distance $d_{\text{min}}^{\mathcal{Q}} = 9$ instead of the minimum distance $d_{\text{min}}^{\mathcal{C}} = 16$.    
\end{ex}

As shown in these examples, after performing the Kronecker product and lifting, the LP-QLDPC codes all have $d_{\text{min}}^\mathcal{Q}  < d_{\text{min}}^\mathcal{C}$. Notably, the reduction in quantum minimum distance is such that $d_{\text{min}}^\mathcal{Q}$ is, in fact, equal to the respective stabilizer weights, resulting in non-degenerate LP-QLDPC codes. 
The following subsections summarize the necessary conditions to avoid having such critical stabilizer-weight logical codewords. 

\subsection{Stabilizer Codewords} 
For quantum CSS codes, the commutativity of the stabilizers implies that the matrices $\mathbf{H_{\mathrm{X}}}$ and $\mathbf{H_{\mathrm{Z}}}$ are orthogonal. 
Alternatively, we have also seen in subsection \ref{sec:OrthogonalityBasematrices}, how the base matrices of $\mathbf{H_{\mathrm{X}}}$ and $\mathbf{H_{\mathrm{Z}}}$ satisfy the even multiplicity constraint. This means that the row space of $\mathbf{H_{\mathrm{X}}}$ essentially forms the stabilizer codewords with respect to $\mathbf{H_{\mathrm{Z}}}$, and vice versa. However, these stabilizer codewords do not contribute to the minimum distance of the quantum code. The stabilizer codewords of the LP-QLDPC codes with the lowest weight are in fact the rows of the parity check matrices, having Hamming weight equal to $m+n$. 
Utilizing the quasi-cyclic property, we observe that any modulo-$L$ shift of a stabilizer remains a valid stabilizer. Therefore, the block representation provides a concise and structured way to express quasi-cyclic stabilizer codewords. We can express the stabilizer codewords in the block form with support in both the `code' and `transposed' parts of the parity check matrix as follows:
\begin{equation}
    \bm{s} =\left( 
    \begin{array}{c|c}
    \bm{s}_{\mathrm{C}} &\bm{s}_{\mathrm{T}} 
    \end{array}
     \right).
     \label{eq:stabilizers}
\end{equation} 
An example of stabilizer codeword in its block form with respect to $B_{\mathrm{X}}$ is the first row of $B_{\mathrm{Z}}$ as in Eq. \eqref{Eq:BZ_leftright} represented as: 
\begin{equation}
\bm{s}=\left( \arrayrulecolor{gray!50}
\left[
\begin{array}{;{2pt/2pt}c ;{2pt/2pt}c ;{2pt/2pt}c ;{2pt/2pt}c ;{25pt/2pt}c
;{2pt/2pt}c ;{2pt/2pt}c ;{2pt/2pt}c ;{25pt/2pt}c 
;{2pt/2pt}c ;{2pt/2pt}c ;{2pt/2pt}c  ;{25pt/2pt}c  
;{2pt/2pt}c  ;{2pt/2pt}c  ;{2pt/2pt}c  ;{2pt/2pt}}
\hdashline[2pt/2pt]
 b_{1,1} & b_{1,2} & \ldots & b_{1,n} &              &         &        &         &  
         &         &        &         & 
         &         &        &         \\
\hdashline[2pt/2pt]
\end{array}
\right] |
\left[
 \begin{array}{;{2pt/2pt}c ;{2pt/2pt}c ;{2pt/2pt}c  ;{25pt/2pt}c
 ;{2pt/2pt}c ;{2pt/2pt}c ;{25pt/2pt}c  
 ;{2pt/2pt}c ;{2pt/2pt}c ;{2pt/2pt}}
\hdashline[2pt/2pt]
b_{1,1}^* &  & \ldots & 
b_{2,1}^* &  & \ldots & 
b_{m,1}^* &  & \ldots \\
\hdashline[2pt/2pt]
\end{array}
\right]
\right).
\end{equation}
Similarly, with respect to $B_{\mathrm{Z}}$, we have a stabilizer codeword as the first row of $B_{\mathrm{X}}$ as in Eq. \eqref{Eq:BX_leftright}:
\begin{equation}
\bm{s}=\left( \arrayrulecolor{gray!50}
\left[
\begin{array}{;{2pt/2pt}c ;{2pt/2pt}c ;{2pt/2pt}c ;{2pt/2pt}c ;{25pt/2pt}c
;{2pt/2pt}c ;{2pt/2pt}c ;{2pt/2pt}c ;{25pt/2pt}c 
;{2pt/2pt}c ;{2pt/2pt}c ;{2pt/2pt}c  ;{25pt/2pt}c  
;{2pt/2pt}c  ;{2pt/2pt}c  ;{2pt/2pt}c  ;{2pt/2pt}}
\hdashline[2pt/2pt]
 b_{1,1} &         &        &         &      b_{1,2} &         &        &         &  
 \ldots  &         &        &         & 
 b_{1,n} &         &        &         \\
\hdashline[2pt/2pt]
\end{array}
\right] |
\left[
 \begin{array}{;{2pt/2pt}c ;{2pt/2pt}c ;{2pt/2pt}c ;{2pt/2pt}c  ;{25pt/2pt}c
 ;{2pt/2pt}c ;{2pt/2pt}c ;{25pt/2pt}c  
 ;{2pt/2pt}c ;{2pt/2pt}c ;{2pt/2pt}}
\hdashline[2pt/2pt]
b_{1,1}^* & b_{2,1}^* & \ldots& b_{m,1}^* & 
          &  &  & 
          &  &  \\
\hdashline[2pt/2pt]
\end{array}
\right]
\right).
\end{equation}
In the block representation, the code part has $n^2$ cells divided into $n$ blocks with $n$ cells each. On the transpose part, there are $m^2$ cells that are divided into $m$ blocks with $m$ cells each. With respect to the binary PCM of the LP-QLDPC code, each cell is of length $L$ bits. 

We further represent the codewords in their \emph{canonical} block representation to book-keep and enumerate them using the quasi-cyclic property. We circularly shift the support of the codeword such that the first shift value is \emph{zero} in the block representation. Such a representation makes the enumeration/location of quasi-cyclic codewords and their cyclic shifts consistent. 

\subsection{Logical Codewords} 
We first obtain the logical $X$ and $Z$ codeword generators by performing Gaussian elimination on $\mathbf{H_{\mathrm{X}}}$ and $\mathbf{H_{\mathrm{Z}}}$ \cite{Gottesman97}. 
However, the minimum distance for a stabilizer code is defined as the smallest Hamming weight of codewords in the rowspace of the logical generators, and that is not in the rowspace of the PCMs. To tackle them systematically, we first enumerate the lowest-weight logical codewords from the nullspace of the classical parity check matrix using the impulse method \cite{Berrou02_1, Declercq2008}. 
In general, computing the minimum distance of a given linear code is an intractable non-polynomial problem. 
However, the algorithmic solution proposed in \cite{Berrou02_1, Declercq2008} allows, for short to moderate length LDPC codes, to obtain a tight estimation of the minimum distance. 
In this paper, we will assume that the minimum distances obtained using the impulse method are accurate.

Note that the LP construction preserves such \emph{low-weight} codewords, as they can also be identified in the LP-QLDPC code. A characteristic behavior is that their support (indices where the codeword vector is non-zero) is restricted in the `code' part. We can easily enumerate them by performing a Kronecker product of the minimal weight codewords obtained from the base matrix with $I_{n}$ and they will only have support in the code part as follows: 
\begin{equation}
    \bm{\ell} =\left( 
    \begin{array}{c|c}
    \bm{\ell}_{\mathrm{C}} &\bf{0}_{\mathrm{T}} 
    \end{array}
    \right),
    \label{eq:cw_classical}
\end{equation} 
where the indices $\mathrm{C}$  and $\mathrm{T}$ indicate respectively the code and transpose part of the LP-QLDPC matrix. 
Note that these logical codewords do not have support in the `transpose' part as indicated by $\bf{0}_{\mathrm{T}}$.

\subsection{Characterization of Lowest-weight Logical Codewords}
Enumerating the low-weight codewords of the LP-QLDPC code directly using the impulse method, we identify new codewords of lower weight; especially interesting are those with Hamming weight equal to the stabilizer weight. 
One common characteristic behavior among these codewords that appear in addition to the classical codewords is that they span the entire code length similar to the stabilizer codewords, taking the form 
\begin{equation}
\bm{\ell} = \left(
\begin{array}{c|c}
\bm{\ell}_{\mathrm{C}} & \bm{\ell}_{\mathrm{T}}
\end{array}
\right).
\label{eq:cw_newlogical}
\end{equation}
However, they are not in the rowspace of the PCMs, which means their canonical block representation is different from any of the stabilizers. Having a non-zero transpose part also indicates that they are, unlike the codewords in Eq. \eqref{eq:cw_classical}, not originating from the classical LDPC code. 

To generalize and identify these new codewords, we check the orthogonality condition of $\bm{\ell}$ in Eq. \eqref{eq:cw_newlogical} with the corresponding $B_X$ and $B_Z$ matrices of type-1 LP-QLDPC code to see how the even multiplicity condition is satisfied. To facilitate this analysis, we impose the following assumption on the structure of such a logical codeword along with the reasoning:\\
\emph{Assumption}: The logical codewords with the Hamming weight equal to the stabilizer weight span the entire code length with equal support on all the blocks. 
Both the lowest weight stabilizers as well as the lowest weight codewords arising from the classical base matrix have unequal support on the blocks. Moreover, the lowest-weight classical codewords do not span the entire code length. Based on this restriction, we can only place one integer exponent in the cells in each of the $n$ blocks in the code part, as well as in each of the $m$ blocks in the transpose part - making the codeword weight equal to $m+n$. 
We can reorder the exact position of the integer exponent in each block by permutation of columns and rows of the base matrix. Hence, the two general canonical block forms of such logical codewords we discuss are:  

\begin{equation}
\bm{\ell}=\arrayrulecolor{gray!50}
\left[
 \begin{array}{;{2pt/2pt}c ;{2pt/2pt}c ;{2pt/2pt}c ;{2pt/2pt}c ;{25pt/2pt}c
 ;{2pt/2pt}c ;{2pt/2pt}c ;{2pt/2pt}c ;{25pt/2pt}c 
 ;{2pt/2pt}c ;{2pt/2pt}c ;{2pt/2pt}c  ;{25pt/2pt}c  
 ;{2pt/2pt}c  ;{2pt/2pt}c  ;{2pt/2pt}c  ;{2pt/2pt} 
 ;{2pt/2pt}c ;{2pt/2pt}c ;{2pt/2pt}c  ;{25pt/2pt}c
 ;{2pt/2pt}c ;{2pt/2pt}c ;{25pt/2pt}c  
 ;{2pt/2pt}c ;{2pt/2pt}c ;{2pt/2pt}}
\hdashline[2pt/2pt]
 l_1  &  &  &  & &  l_2  & & & & & \ldots   & & & & & l_n  &   &  & l_{n+1}  &  & l_{n+2} &   & l_{n+m} & & \\
\hdashline[2pt/2pt]
\end{array}
\right]
\end{equation}
and 
\begin{equation}
\bm{\ell}=\arrayrulecolor{gray!50}
\left[
 \begin{array}{;{2pt/2pt}c ;{2pt/2pt}c ;{2pt/2pt}c ;{2pt/2pt}c ;{25pt/2pt}c
 ;{2pt/2pt}c ;{2pt/2pt}c ;{2pt/2pt}c ;{25pt/2pt}c 
 ;{2pt/2pt}c ;{2pt/2pt}c ;{2pt/2pt}c  ;{25pt/2pt}c  
 ;{2pt/2pt}c  ;{2pt/2pt}c  ;{2pt/2pt}c  ;{2pt/2pt} 
 ;{2pt/2pt}c ;{2pt/2pt}c ;{2pt/2pt}c  ;{25pt/2pt}c
 ;{2pt/2pt}c ;{2pt/2pt}c ;{25pt/2pt}c  
 ;{2pt/2pt}c ;{2pt/2pt}c ;{2pt/2pt}}
\hdashline[2pt/2pt]
   &  &  & l_1 &  &   & & l_2 & & \ldots &   & & l_n & & &  & l_{n+1}  &  &  &  & l_{n+2} &   &  & & l_{n+m} \\
\hdashline[2pt/2pt]
\end{array}
\right].
\end{equation}
Given the block exponent representation for a base matrix 
\begin{equation}
B = \begin{bmatrix}
b_{1,1} & b_{1,2} & \cdots & b_{1,n} \\
b_{2,1} & b_{2,2} & \cdots & b_{2,n} \\
\vdots & \vdots & \ddots & \vdots \\
b_{m,1} & b_{m,2} & \cdots & b_{m,n}
\end{bmatrix},
\end{equation}
let us define the \(i\)-th row vector \(\mathbf{r}_i\) and the \(j\)-th column vector \(\mathbf{c}_j\) as follows: 
The \(i\)-th row vector \(\mathbf{r}_i\) for \(i = 1, 2, \ldots, m\): 
\begin{equation}
    \label{eq:row_ri}
    \mathbf{r}_i = \begin{pmatrix} b_{i,1}, & b_{i,2}, & \cdots &, b_{i,n} \end{pmatrix}.
\end{equation}
Similarly, the $j$-th column vector $\mathbf{c}_j$ for $j = 1, 2, \ldots, n$ is 
\begin{equation}
    \label{eq:col_cj}
    \mathbf{c}_j = \begin{pmatrix} b_{1,j}, & b_{2,j}, & \ldots &, b_{m,j} \end{pmatrix}^\mathsf{T}.
\end{equation}
In addition, in terms of the conjugate entries, we define   
\begin{equation}
\label{eq:row_rjConj}
\mathbf{r}_i^* = \begin{pmatrix} b_{i,1}^*, & b_{i,2}^*, & \ldots &, b_{i,n}^* \end{pmatrix} 
\end{equation}
and 
\begin{equation}
\label{eq:col_cjConj}
\mathbf{c}_j^* = \begin{pmatrix} b_{1,j}^* & b_{2,j}^* & \ldots & b_{m,j}^* \end{pmatrix}^\mathsf{T}.
\end{equation}
For simplicity and without loss of generality to obtain a working example, let us take a base matrix of dimension $m \times n  = 3 \times 4$ as 
$$
B = \begin{bmatrix}
b_{1,1} & b_{1,2} & b_{1,3} & b_{1,4} \\
b_{2,1} & b_{2,2} & b_{2,3}& b_{2,4} \\
b_{3,1} & b_{3,2} & b_{3,3}& b_{3,4}
\end{bmatrix}.$$
Based on our assumption, the logical codewords with the minimum Hamming weight that span the entire code with equal support of all the blocks have the block form  
\begin{equation}
\bm{\ell} =
\left(
  \begin{array}{c|c}
\bm{\ell}_{\mathrm{C}} & \bm{\ell}_{\mathrm{T}}
     \end{array}
\right)
\label{eq:cw_newlogical_example}
\end{equation}
where 
\begin{equation}
\bm{\ell}_\mathrm{C}=\arrayrulecolor{gray!50}
\left[
 \begin{array}{;{2pt/2pt}c ;{2pt/2pt}c ;{2pt/2pt}c ;{2pt/2pt}c ;{25pt/2pt}c
 ;{2pt/2pt}c ;{2pt/2pt}c ;{2pt/2pt}c ;{25pt/2pt}c 
 ;{2pt/2pt}c ;{2pt/2pt}c ;{2pt/2pt}c  ;{25pt/2pt}c  
 ;{2pt/2pt}c  ;{2pt/2pt}c  ;{2pt/2pt}c  ;{2pt/2pt}}
\hdashline[2pt/2pt]
 a  &  &  &  & &  b  & & & & & c   & & & & & d   \\
\hdashline[2pt/2pt]
\end{array}
\right]
\end{equation}
and 
\begin{equation}
\bm{\ell}_\mathrm{T}=\arrayrulecolor{gray!50}
\left[
 \begin{array}{;{2pt/2pt}c ;{2pt/2pt}c ;{2pt/2pt}c  ;{25pt/2pt}c
 ;{2pt/2pt}c ;{2pt/2pt}c ;{25pt/2pt}c  
 ;{2pt/2pt}c ;{2pt/2pt}c ;{2pt/2pt}}
\hdashline[2pt/2pt]
  &  &  x &  & y &   & z & & \\
\hdashline[2pt/2pt]
\end{array}
\right].
\end{equation}
Based on this representation, let us denote the vector corresponding to the elements $a, b,c,$ and $d$ as a row vector of length $4$ (in general $n$): 
\begin{equation}
    \mathbf{u} = [a, b, c, d].
\end{equation}
Also, let vectors $\mathbf{x},\mathbf{y},$ and  $\mathbf{z}$ represent row vectors $[x,x,x,x]$, $[y,y,y,y]$, and  $[z,z,z,z]$, respectively.\\
Let us check the orthogonality constraint by taking the minus operation (modulo-$L$) on the `code' part 
\begin{equation}
\bm{\ell}_\mathrm{C}=\arrayrulecolor{gray!50}
\left[
 \begin{array}{;{2pt/2pt}c ;{2pt/2pt}c ;{2pt/2pt}c ;{2pt/2pt}c ;{25pt/2pt}c
 ;{2pt/2pt}c ;{2pt/2pt}c ;{2pt/2pt}c ;{25pt/2pt}c 
 ;{2pt/2pt}c ;{2pt/2pt}c ;{2pt/2pt}c  ;{25pt/2pt}c  
 ;{2pt/2pt}c  ;{2pt/2pt}c  ;{2pt/2pt}c  ;{2pt/2pt}}
\hdashline[2pt/2pt]
 a  &  &  &  & &  b  & & & & & c   & & & & & d   \\
\hdashline[2pt/2pt]
\end{array}
\right]
\end{equation}
 with the rows on the code part of the LP-QLDPC code of $B_{\mathrm{X}}$ matrix:= $B \otimes {I}_4$. The first row of $B \otimes {I}_4$ is $$\arrayrulecolor{gray!50}
\left[
 \begin{array}{;{2pt/2pt}c ;{2pt/2pt}c ;{2pt/2pt}c ;{2pt/2pt}c ;{25pt/2pt}c
 ;{2pt/2pt}c ;{2pt/2pt}c ;{2pt/2pt}c ;{25pt/2pt}c 
 ;{2pt/2pt}c ;{2pt/2pt}c ;{2pt/2pt}c  ;{25pt/2pt}c  
 ;{2pt/2pt}c  ;{2pt/2pt}c  ;{2pt/2pt}c  ;{2pt/2pt}}
\hdashline[2pt/2pt]
b_{1,1} &  &  &  & b_{1,2} & & & & b_{1,3} & &  & &b_{1,4} & & &  \\
\hdashline[2pt/2pt]
\end{array}
\right].$$ 
So, we have the two row vectors as the following:
\begin{equation}
\arrayrulecolor{gray!50}
\left[
 \begin{array}{;{2pt/2pt}c ;{2pt/2pt}c ;{2pt/2pt}c ;{2pt/2pt}c ;{25pt/2pt}c
 ;{2pt/2pt}c ;{2pt/2pt}c ;{2pt/2pt}c ;{25pt/2pt}c 
 ;{2pt/2pt}c ;{2pt/2pt}c ;{2pt/2pt}c  ;{25pt/2pt}c  
 ;{2pt/2pt}c  ;{2pt/2pt}c  ;{2pt/2pt}c  ;{2pt/2pt}}
\hdashline[2pt/2pt]
 a  &  &  &  & &  b  & & & & & c   & & & & & d   \\
\hdashline[2pt/2pt]
b_{1,1} &  &  &  & b_{1,2} & & & & b_{1,3} & &  & &b_{1,4} & & &  \\
\hdashline[2pt/2pt]
\hdashline[2pt/2pt]
\end{array}
\right]
\end{equation}
Observe that only one cell overlaps and will result in value $b_{1,1} - a$. The subsequent three rows also give a single term each as they are of the form:  
\begin{equation}
\arrayrulecolor{gray!50}
\left[
 \begin{array}{;{2pt/2pt}c ;{2pt/2pt}c ;{2pt/2pt}c ;{2pt/2pt}c ;{25pt/2pt}c
 ;{2pt/2pt}c ;{2pt/2pt}c ;{2pt/2pt}c ;{25pt/2pt}c 
 ;{2pt/2pt}c ;{2pt/2pt}c ;{2pt/2pt}c  ;{25pt/2pt}c  
 ;{2pt/2pt}c  ;{2pt/2pt}c  ;{2pt/2pt}c  ;{2pt/2pt}}
\hdashline[2pt/2pt]
 a  &  &  &  & &  b  & & & & & c   & & & & & d   \\
\hdashline[2pt/2pt]
 & b_{1,1} &  &  &  &  b_{1,2}& & &  & b_{1,3} &  & & & b_{1,4} & &  \\
\hdashline[2pt/2pt]
\hdashline[2pt/2pt]
\end{array}
\right],
\end{equation}

\begin{equation}
\arrayrulecolor{gray!50}
\left[
 \begin{array}{;{2pt/2pt}c ;{2pt/2pt}c ;{2pt/2pt}c ;{2pt/2pt}c ;{25pt/2pt}c
 ;{2pt/2pt}c ;{2pt/2pt}c ;{2pt/2pt}c ;{25pt/2pt}c 
 ;{2pt/2pt}c ;{2pt/2pt}c ;{2pt/2pt}c  ;{25pt/2pt}c  
 ;{2pt/2pt}c  ;{2pt/2pt}c  ;{2pt/2pt}c  ;{2pt/2pt}}
\hdashline[2pt/2pt]
 a  &  &  &  & &  b  & & & & & c   & & & & & d   \\
\hdashline[2pt/2pt]
 &  & b_{1,1} &  &  &  & b_{1,2} & &  &  & b_{1,3}  & & &  & b_{1,4} &  \\
\hdashline[2pt/2pt]
\hdashline[2pt/2pt]
\end{array}
\right], 
\end{equation}
and
\begin{equation}
\arrayrulecolor{gray!50}
\left[
 \begin{array}{;{2pt/2pt}c ;{2pt/2pt}c ;{2pt/2pt}c ;{2pt/2pt}c ;{25pt/2pt}c
 ;{2pt/2pt}c ;{2pt/2pt}c ;{2pt/2pt}c ;{25pt/2pt}c 
 ;{2pt/2pt}c ;{2pt/2pt}c ;{2pt/2pt}c  ;{25pt/2pt}c  
 ;{2pt/2pt}c  ;{2pt/2pt}c  ;{2pt/2pt}c  ;{2pt/2pt}}
\hdashline[2pt/2pt]
 a  &  &  &  & &  b  & & & & & c   & & & & & d   \\
\hdashline[2pt/2pt]
 &  &  & b_{1,1} &  &  & & b_{1,2}  &  &  &   & b_{1,3} & &  &  & b_{1,4} \\
\hdashline[2pt/2pt]
\hdashline[2pt/2pt]
\end{array}
\right].
\end{equation}
Overall, the minus operation results in the following four (in general $n$) terms: 
$b_{1,1} - a$, $b_{1,2} - b$, $b_{1,3} - c$, and $b_{1,4} - d$. Utilizing the row vector representation, we can write as $\mathbf{r}_1 - \mathbf{u}$, where $\mathbf{u} = [a,b,c,d].$ Overall, we have three (in general $m$) expressions: 
\begin{align}
\begin{split}
    \label{eq:rowleft1}
        & \mathbf{r}_1 - \mathbf{u} \\
        & \mathbf{r}_2 - \mathbf{u} \\ 
        & \mathbf{r}_3 - \mathbf{u}.
\end{split}
\end{align}
Recall on the transpose part, we have 
\begin{equation}
\bm{\ell}_\mathrm{T}=\arrayrulecolor{gray!50}
\left[
 \begin{array}{;{2pt/2pt}c ;{2pt/2pt}c ;{2pt/2pt}c  ;{25pt/2pt}c
 ;{2pt/2pt}c ;{2pt/2pt}c ;{25pt/2pt}c  
 ;{2pt/2pt}c ;{2pt/2pt}c ;{2pt/2pt}}
\hdashline[2pt/2pt]
  &  &  x &  & y &   & z & & \\
\hdashline[2pt/2pt]
\end{array}
\right].
\end{equation}
Applying minus operation with the first row of $\mathbf{I}_3 \otimes B^*$, that is $
\arrayrulecolor{gray!50}
\left[
 \begin{array}{;{2pt/2pt}c ;{2pt/2pt}c ;{2pt/2pt}c  ;{25pt/2pt}c
 ;{2pt/2pt}c ;{2pt/2pt}c ;{25pt/2pt}c  
 ;{2pt/2pt}c ;{2pt/2pt}c ;{2pt/2pt}}
\hdashline[2pt/2pt]
 b_{1,1}^* &  b_{2,1}^* &  b_{3,1}^* &  &  &   & & & \\
\hdashline[2pt/2pt]
\end{array}
\right]$, we have 

\begin{equation}
\arrayrulecolor{gray!50}
\left[
 \begin{array}{;{2pt/2pt}c ;{2pt/2pt}c ;{2pt/2pt}c  ;{25pt/2pt}c
 ;{2pt/2pt}c ;{2pt/2pt}c ;{25pt/2pt}c  
 ;{2pt/2pt}c ;{2pt/2pt}c ;{2pt/2pt}}
\hdashline[2pt/2pt]
  &  &  x &  & y &   & z & & \\
  \hdashline[2pt/2pt]
 b_{1,1}^* &  b_{2,1}^* &  b_{3,1}^* &  &  &   & & & \\
\hdashline[2pt/2pt]
\end{array}
\right]
\end{equation}
and result in a single term: $b_{3,1}^* - x.$ The subsequent three rows are of the form: 
\begin{equation}
\arrayrulecolor{gray!50}
\left[
 \begin{array}{;{2pt/2pt}c ;{2pt/2pt}c ;{2pt/2pt}c  ;{25pt/2pt}c
 ;{2pt/2pt}c ;{2pt/2pt}c ;{25pt/2pt}c  
 ;{2pt/2pt}c ;{2pt/2pt}c ;{2pt/2pt}}
\hdashline[2pt/2pt]
  &  &  x &  & y &   & z & & \\
  \hdashline[2pt/2pt]
 b_{1,2}^* &  b_{2,2}^* &  b_{3,2}^* &  &  &   & & & \\
\hdashline[2pt/2pt]
\end{array}
\right]
\end{equation}

\begin{equation}
\arrayrulecolor{gray!50}
\left[
 \begin{array}{;{2pt/2pt}c ;{2pt/2pt}c ;{2pt/2pt}c  ;{25pt/2pt}c
 ;{2pt/2pt}c ;{2pt/2pt}c ;{25pt/2pt}c  
 ;{2pt/2pt}c ;{2pt/2pt}c ;{2pt/2pt}}
\hdashline[2pt/2pt]
  &  &  x &  & y &   & z & & \\
  \hdashline[2pt/2pt]
 b_{1,3}^* &  b_{2,3}^* &  b_{3,3}^* &  &  &   & & & \\
\hdashline[2pt/2pt]
\end{array}
\right]
\end{equation}

\begin{equation}
\arrayrulecolor{gray!50}
\left[
 \begin{array}{;{2pt/2pt}c ;{2pt/2pt}c ;{2pt/2pt}c  ;{25pt/2pt}c
 ;{2pt/2pt}c ;{2pt/2pt}c ;{25pt/2pt}c  
 ;{2pt/2pt}c ;{2pt/2pt}c ;{2pt/2pt}}
\hdashline[2pt/2pt]
  &  &  x &  & y &   & z & & \\
  \hdashline[2pt/2pt]
 b_{1,4}^* &  b_{2,4}^* &  b_{3,4}^* &  &  &   & & & \\
\hdashline[2pt/2pt]
\end{array}
\right]
\end{equation}
which give $b_{3,2}^* - x$, $b_{3,3}^* - x$, and $b_{3,4}^* - x$, respectively. These terms can be concisely written using $\mathbf{r}_3^* =  [b_{3,1}^*, b_{3,2}^*, b_{3,3}^*, b_{3,4}^*]$ and $\mathbf{x} = [x,x,x,x]$ as
\begin{equation}
    \mathbf{r}_3^* - \mathbf{x}. 
\end{equation} 
Overall, we will have three (in general $m$) expressions: 
\begin{align}
\begin{split}
& \mathbf{r}_3^* - \mathbf{x}\\
& \mathbf{r}_2^* - \mathbf{y} \\ 
& \mathbf{r}_1^* - \mathbf{z}.
\end{split}
\label{eq:rowright1}
\end{align}
For orthogonality with the even multiplicity, the respective rows of Eq. \eqref{eq:rowleft1} must equate with Eq. \eqref{eq:rowright1} as follows:
\begin{align}
\begin{split}
&\mathbf{r}_1 - \mathbf{u} = \mathbf{r}_3^* - \mathbf{x}\\
&\mathbf{r}_2 - \mathbf{u} = \mathbf{r}_2^* - \mathbf{y}\\
&\mathbf{r}_3 - \mathbf{u} = \mathbf{r}_1^* - \mathbf{z}.
\end{split}
\end{align}

We can substitute for $\mathbf{x} = \mathbf{y} = \mathbf{z} = \mathbf{0}$ as the logical codeword we assumed can be re-expressed in its canonical block representation by row-column permutations. Similarly, we can reorder the rows and columns of the base matrix to have zeros in the topmost row and leftmost column to have zeros. Since the goal is to find generic expressions for the base matrix entries, we use the first approach and have the following expressions: 
\begin{equation}
\begin{split}
     &\mathbf{r}_1  - \mathbf{u}  = \mathbf{r}_3^*\\
     &\mathbf{r}_2  - \mathbf{u}  = \mathbf{r}_2^*\\
     &\mathbf{r}_3  - \mathbf{u}  = \mathbf{r}_1^*.
\end{split}   
\end{equation}
Solving for the expressions using the relation for conjugate entries as $b_i^* = L-b_i, \mod L$, we have the \emph{Row Partition Constraint}:
\begin{equation}
\mathbf{u} = \mathbf{r}_1 + \mathbf{r}_3 = 2*\mathbf{r}_2.
\label{eq:RowConstraint}
\end{equation}

\noindent\textit{Example 4:} For the row partition constraint in Eq. \eqref{eq:RowConstraint}, let us take a classical code with circulant size $L=10$ and a base matrix of parity check as follows:
\begin{equation}
B=\begin{bmatrix}
0&0&0&0\\
0&7&1&4\\
0&4&2&8
\end{bmatrix}.
\label{eq:B_3_4_rowConstraintExample}
\end{equation}
This base matrix satisfies the row partition constraint wherein $\mathbf{r}_1 + \mathbf{r}_3 = 2*\mathbf{r}_2 \mod L = [0, 4, 2, 8].$ Hence, for the LP-QLDPC code obtained from the base matrix, we obtain logical codewords with Hamming weight = $m+n$ represented in their canonical block form as
\begin{equation}
    \label{eq:RowConstraintExample}
\bm{\ell} = \left(
\begin{array}{c|c}
\arrayrulecolor{gray!50}
\left[
 \begin{array}
 {;{2pt/2pt}c ;{2pt/2pt}c ;{2pt/2pt}c ;{2pt/2pt}c ;{25pt/2pt}c
 ;{2pt/2pt}c ;{2pt/2pt}c ;{2pt/2pt}c ;{25pt/2pt}c 
 ;{2pt/2pt}c ;{2pt/2pt}c ;{2pt/2pt}c  ;{25pt/2pt}c  
 ;{2pt/2pt}c  ;{2pt/2pt}c  ;{2pt/2pt}c  ;{2pt/2pt}}
\hdashline[2pt/2pt]
0 &   &   &  &  
  & 4 &   &  & 
  &   & 2 &  &  
  &   &   &  8 \\
\hdashline[2pt/2pt]
\end{array}
\right] 
& 
\left[
 \begin{array}{;{2pt/2pt}c ;{2pt/2pt}c ;{2pt/2pt}c ;{25pt/2pt}c 
 ;{2pt/2pt}c ;{2pt/2pt}c ;{25pt/2pt}c 
 ;{2pt/2pt}c ;{2pt/2pt}c ;{2pt/2pt}c}
\hdashline[2pt/2pt]
   &  &  0  & 
   &  0  &  & 
   0  &  &  \\
\hdashline[2pt/2pt]
\end{array}
\right]
\end{array}
\right).
\end{equation}

The question of whether the logical codeword structure we took is generic enough is relevant. The only assumption we started with is that the integer entries must have to be in different blocks and must span all the blocks in order to result in a logical codeword (and not a stabilizer codeword) of the lowest weight possible. The choice of positions of non-zero entries in the logical codeword can indeed be reversed - in the code and transpose side. Since we started with a symmetric LP-QLDPC code, we also have the low-weight codeword that can be expressed based on the partitioning of its blocks as $\bm{\ell} = \left(
\begin{array}{c|c}
\bm{\ell}_{\mathrm{C}} & \bm{\ell}_{\mathrm{T}}
\end{array}
\right),$ where 
\begin{equation}
\bm{\ell}_\mathrm{C}=\arrayrulecolor{gray!50}
\left[
 \begin{array}{;{2pt/2pt}c ;{2pt/2pt}c ;{2pt/2pt}c ;{2pt/2pt}c ;{25pt/2pt}c
 ;{2pt/2pt}c ;{2pt/2pt}c ;{2pt/2pt}c ;{25pt/2pt}c 
 ;{2pt/2pt}c ;{2pt/2pt}c ;{2pt/2pt}c  ;{25pt/2pt}c  
 ;{2pt/2pt}c  ;{2pt/2pt}c  ;{2pt/2pt}c  ;{2pt/2pt}}
\hdashline[2pt/2pt]
   &  &  & a & & & b  & & & c &    & & d  & & &   \\
\hdashline[2pt/2pt]
\end{array}
\right]
\end{equation}
and 
\begin{equation}
\bm{\ell}_\mathrm{T}=\arrayrulecolor{gray!50}
\left[
 \begin{array}{;{2pt/2pt}c ;{2pt/2pt}c ;{2pt/2pt}c ;{25pt/2pt}c 
 ;{2pt/2pt}c ;{2pt/2pt}c ;{25pt/2pt}c 
 ;{2pt/2pt}c ;{2pt/2pt}c ;{2pt/2pt}c}
\hdashline[2pt/2pt]
 x  &  &  &  & y  &  &  &   & z  \\
\hdashline[2pt/2pt]
\end{array}
\right].
\end{equation}
Based on this representation, let us denote the vector corresponding to the elements $x, y,$ and $z$ as a column vector of length $3$ (in general $m$).
\begin{equation}
\label{eq:vectorv}
    \mathbf{v} = [x, y, z]^\mathsf{T}.
\end{equation}
Also, let the vectors $\mathbf{a},\mathbf{b},$ $\mathbf{c},$ and $\mathbf{d}$ represent the column vectors $[a,a,a]^\mathsf{T}$, $[b,b,b]^\mathsf{T}$, $[c,c,c]^\mathsf{T}$, and $[d,d,d]^\mathsf{T}$, respectively. 

We check the orthogonality constraint by taking the minus operation (modulo-$L$) of `code' part $\bm{\ell}_\mathrm{C}$ with the rows on the left part of the LP-QLDPC code of $B_{\mathrm{Z}}$ matrix:= ${I}_4 \otimes B$. 
The first row of ${I}_4 \otimes B$ is $$\arrayrulecolor{gray!50}
\left[
 \begin{array}{
 ;{2pt/2pt}c ;{2pt/2pt}c ;{2pt/2pt}c ;{2pt/2pt}c ;{25pt/2pt}c
 ;{2pt/2pt}c ;{2pt/2pt}c ;{2pt/2pt}c ;{25pt/2pt}c 
 ;{2pt/2pt}c ;{2pt/2pt}c ;{2pt/2pt}c  ;{25pt/2pt}c  
 ;{2pt/2pt}c  ;{2pt/2pt}c  ;{2pt/2pt}c  ;{2pt/2pt}}
\hdashline[2pt/2pt]
b_{1,1} & b_{1,2} & b_{1,3}  & b_{1,4}  & & & & & & &  & & & & &  \\
\hdashline[2pt/2pt]
\end{array}
\right].$$ 
So, we have the two row vectors as the following:
\begin{equation}
\arrayrulecolor{gray!50}
\left[
 \begin{array}{;{2pt/2pt}c ;{2pt/2pt}c ;{2pt/2pt}c ;{2pt/2pt}c ;{25pt/2pt}c
 ;{2pt/2pt}c ;{2pt/2pt}c ;{2pt/2pt}c ;{25pt/2pt}c 
 ;{2pt/2pt}c ;{2pt/2pt}c ;{2pt/2pt}c  ;{25pt/2pt}c  
 ;{2pt/2pt}c  ;{2pt/2pt}c  ;{2pt/2pt}c  ;{2pt/2pt}}
\hdashline[2pt/2pt]
   &    &    &  a &
   &    &  b &    &
   & c  &    &    &
 d &    &    &    \\
\hdashline[2pt/2pt]
b_{1,1} & b_{1,2} & b_{1,3}  & b_{1,4}  & & & & & & &  & & & & &  \\
\hdashline[2pt/2pt]
\hdashline[2pt/2pt]
\end{array}
\right].
\end{equation}
Observe that only one cell overlaps to give the value $b_{1,4} - a$. The subsequent two rows also give one term each as they are of the form:  
\begin{equation}
\arrayrulecolor{gray!50}
\left[
 \begin{array}{;{2pt/2pt}c ;{2pt/2pt}c ;{2pt/2pt}c ;{2pt/2pt}c ;{25pt/2pt}c
 ;{2pt/2pt}c ;{2pt/2pt}c ;{2pt/2pt}c ;{25pt/2pt}c 
 ;{2pt/2pt}c ;{2pt/2pt}c ;{2pt/2pt}c  ;{25pt/2pt}c  
 ;{2pt/2pt}c  ;{2pt/2pt}c  ;{2pt/2pt}c  ;{2pt/2pt}}
\hdashline[2pt/2pt]
   &    &    &  a &
   &    &  b &    &
   & c  &    &    &
 d &    &    &    \\
\hdashline[2pt/2pt]
b_{2,1} & b_{2,2} & b_{2,3}  & b_{2,4}  & & & & & & &  & & & & &  \\
\hdashline[2pt/2pt]
\end{array}
\right]
\end{equation}

\begin{equation}
\arrayrulecolor{gray!50}
\left[
 \begin{array}{;{2pt/2pt}c ;{2pt/2pt}c ;{2pt/2pt}c ;{2pt/2pt}c ;{25pt/2pt}c
 ;{2pt/2pt}c ;{2pt/2pt}c ;{2pt/2pt}c ;{25pt/2pt}c 
 ;{2pt/2pt}c ;{2pt/2pt}c ;{2pt/2pt}c  ;{25pt/2pt}c  
 ;{2pt/2pt}c  ;{2pt/2pt}c  ;{2pt/2pt}c  ;{2pt/2pt}}
\hdashline[2pt/2pt]
 \hdashline[2pt/2pt]
   &    &    &  a &
   &    &  b &    &
   & c  &    &    &
 d &    &    &    \\
\hdashline[2pt/2pt]
b_{3,1} & b_{3,2} & b_{3,3}  & b_{3,4}  & & & & & & &  & & & & &  \\
\hdashline[2pt/2pt]
\hdashline[2pt/2pt]
\end{array}
\right]
\end{equation}
and
\begin{equation}
\arrayrulecolor{gray!50}
\left[
 \begin{array}{;{2pt/2pt}c ;{2pt/2pt}c ;{2pt/2pt}c ;{2pt/2pt}c ;{25pt/2pt}c
 ;{2pt/2pt}c ;{2pt/2pt}c ;{2pt/2pt}c ;{25pt/2pt}c 
 ;{2pt/2pt}c ;{2pt/2pt}c ;{2pt/2pt}c  ;{25pt/2pt}c  
 ;{2pt/2pt}c  ;{2pt/2pt}c  ;{2pt/2pt}c  ;{2pt/2pt}}
\hdashline[2pt/2pt]
\hdashline[2pt/2pt]
   &    &    &  a &
   &    &  b &    &
   & c  &    &    &
 d &    &    &    \\
\hdashline[2pt/2pt]
 &  &  & b_{1,1} &  &  & & b_{1,2}  &  &  &   & b_{1,3} & &  &  & b_{1,4} \\
\hdashline[2pt/2pt]
\hdashline[2pt/2pt]
\end{array}
\right].
\end{equation}
Overall, the minus operation results in the following 3 (or in general $m$) terms: 
$b_{1,4} - a$, $b_{2,4} - a$, and $b_{3,4} - a$. Utilizing the column vector representation, we have: $\mathbf{c}_4 - \mathbf{a}$, where $\mathbf{a} = [a,a,a]^{\mathsf{T}}.$  As in the case of derivation of the row partition constraint, the difference of $\bm{\ell}_\mathrm{C}$ with the rows on the left part of the LP-QLDPC code of $B_Z$ matrix:= $\mathbf{I}_4 \otimes B$ gives 4 (in general $n$) terms that can be expressed as follows:
\begin{align}
\begin{split}
&\mathbf{c}_4 - \mathbf{a}\\
&\mathbf{c}_3 - \mathbf{b}\\
&\mathbf{c}_2 - \mathbf{c}\\ 
&\mathbf{c}_1 - \mathbf{d}.
\end{split}
\label{eq:ColEqLeft}
\end{align}
Now, let us take the minus operation of $\bm{\ell}_\mathrm{T}$ with the rows on the transpose part of the $B_Z$ matrix:= $B^* \otimes \mathbf{I}_3$. Recall that we have 
\begin{equation}
\bm{\ell}_\mathrm{T}=\arrayrulecolor{gray!50}
\left[
 \begin{array}{;{2pt/2pt}c ;{2pt/2pt}c ;{2pt/2pt}c  ;{25pt/2pt}c
 ;{2pt/2pt}c ;{2pt/2pt}c ;{25pt/2pt}c  
 ;{2pt/2pt}c ;{2pt/2pt}c ;{2pt/2pt}}
\hdashline[2pt/2pt]
 x &   &    &
   & y &    & 
   &   & z \\
\hdashline[2pt/2pt]
\end{array}
\right].
\end{equation}
To perform the minus operation with the first row of  $B^* \otimes {I}_3$, that is $
\arrayrulecolor{gray!50}
\left[
 \begin{array}{;{2pt/2pt}c ;{2pt/2pt}c ;{2pt/2pt}c  ;{25pt/2pt}c
 ;{2pt/2pt}c ;{2pt/2pt}c ;{25pt/2pt}c  
 ;{2pt/2pt}c ;{2pt/2pt}c ;{2pt/2pt}}
\hdashline[2pt/2pt]
 b_{1,1}^* &            &           & 
b_{2,1}^*  &            &           &
 b_{3,1}^* &            &           \\
\hdashline[2pt/2pt]
\end{array}
\right]$, we have 
\begin{equation}
\arrayrulecolor{gray!50}
\left[
 \begin{array}{;{2pt/2pt}c ;{2pt/2pt}c ;{2pt/2pt}c  ;{25pt/2pt}c
 ;{2pt/2pt}c ;{2pt/2pt}c ;{25pt/2pt}c  
 ;{2pt/2pt}c ;{2pt/2pt}c ;{2pt/2pt}}
\hdashline[2pt/2pt]
 x &   &    &
   & y &    & 
   &   & z \\
\hdashline[2pt/2pt]
 b_{1,1}^* &            &           & 
b_{2,1}^*  &            &           &
 b_{3,1}^* &            &           \\
\hdashline[2pt/2pt]
\end{array}
\right].
\end{equation}
The overlapping term in the first cell gives $b_{1,1}^* - x$. 
Similarly, we can express the next two rows as follows: 
\begin{equation}
\arrayrulecolor{gray!50}
\left[
 \begin{array}{;{2pt/2pt}c ;{2pt/2pt}c ;{2pt/2pt}c  ;{25pt/2pt}c
 ;{2pt/2pt}c ;{2pt/2pt}c ;{25pt/2pt}c  
 ;{2pt/2pt}c ;{2pt/2pt}c ;{2pt/2pt}}
\hdashline[2pt/2pt]
 x &   &    &
   & y &    & 
   &   & z \\
\hdashline[2pt/2pt]
 &   b_{1,1}^*     &           & 
 &   b_{2,1}^*     &           &
 &   b_{3,1}^*     &           \\
\hdashline[2pt/2pt]
\end{array}
\right]
\end{equation}

\begin{equation}
\arrayrulecolor{gray!50}
\left[
 \begin{array}{;{2pt/2pt}c ;{2pt/2pt}c ;{2pt/2pt}c  ;{25pt/2pt}c
 ;{2pt/2pt}c ;{2pt/2pt}c ;{25pt/2pt}c  
 ;{2pt/2pt}c ;{2pt/2pt}c ;{2pt/2pt}}
\hdashline[2pt/2pt]
 x &   &    &
   & y &    & 
   &   & z \\
\hdashline[2pt/2pt]
 &            &     b_{1,1}^*     & 
 &            &     b_{2,1}^*     &
 &            &     b_{3,1}^*       \\
\hdashline[2pt/2pt]
\end{array}
\right].
\end{equation}
We obtain $b_{1,1}^* -x$, 
$b_{2,1}^* -y$, and $b_{3,1}^* -z$ as the result and we can concisely write using the column representation in Eq. \eqref{eq:col_cjConj} and \eqref{eq:vectorv} as $\mathbf{c}_1^* - \mathbf{v}$. Considering the difference of $\ell_\mathrm{C}$ with all rows, we get the following simplified expressions. 
\begin{align}
\begin{split}
&\mathbf{c}_1^* - \mathbf{v}\\
&\mathbf{c}_2^* - \mathbf{v}\\ 
&\mathbf{c}_3^* - \mathbf{v}\\ 
&\mathbf{c}_4^* - \mathbf{v}. 
\end{split}
\label{eq:ColEqRight}
\end{align}

Using the same argument that this codeword can be expressed in a canonical way, we obtain the linear equations to be satisfied for $\bm{\ell}$ to be orthogonal as: 
\begin{align}
\begin{split}
    &\mathbf{c}_4 = \mathbf{c}_1^* - \mathbf{v} \\
    &\mathbf{c}_3 = \mathbf{c}_2^* - \mathbf{v}\\
    &\mathbf{c}_2 = \mathbf{c}_3^* - \mathbf{v}\\
    &\mathbf{c}_1 = \mathbf{c}_4^*- \mathbf{v}.
\end{split}
\end{align}
Solving for the expressions using the relation for conjugate entries as $b_i^* = L-b_i$, $\mod L$, we have the \emph{Column Partition Constraint}:
\begin{equation}
\mathbf{v}^* = \mathbf{c}_1 + \mathbf{c}_4 = \mathbf{c}_2 + \mathbf{c}_3.
\label{eq:ColumnConstraint}
\end{equation}

\noindent\textit{Example 2 Revisited: } To illustrate the column partition constraint in Eq. \eqref{eq:ColumnConstraint}, let us revisit our Example 2 with the classical code $\mathcal{C}: [104,30,14]$, and the base matrix of PCM with circulant size $L=26$,
\begin{equation}
B=\begin{bmatrix}
0&0&0&0\\
0&6&4&10\\
0&8&14&22
\end{bmatrix}.
\end{equation}
This base matrix satisfies the column partition constraint wherein $\mathbf{c}_1 + \mathbf{c}_4 = \mathbf{c}_2 + \mathbf{c}_3 \mod 26 = [0\; 10\; 22]^\mathsf{T}.$ Hence, for the LP-QLDPC code obtained from the base matrix, we obtain logical codewords with Hamming weight = $m+n$ represented in their canonical block form as follows: 
\begin{equation}
\bm{\ell} = \left(
\begin{array}{c|c}
\arrayrulecolor{gray!50}
\left[
 \begin{array}{;{2pt/2pt}c ;{2pt/2pt}c ;{2pt/2pt}c ;{2pt/2pt}c ;{25pt/2pt}c
 ;{2pt/2pt}c ;{2pt/2pt}c ;{2pt/2pt}c ;{25pt/2pt}c 
 ;{2pt/2pt}c ;{2pt/2pt}c ;{2pt/2pt}c  ;{25pt/2pt}c  
 ;{2pt/2pt}c  ;{2pt/2pt}c  ;{2pt/2pt}c  ;{2pt/2pt}}
\hdashline[2pt/2pt]
   &  &  & 0 & & & 0  & & & 0 &    & & 0  & & &   \\
\hdashline[2pt/2pt]
\end{array}
\right] 
&
\left[
 \begin{array}{;{2pt/2pt}c ;{2pt/2pt}c ;{2pt/2pt}c ;{25pt/2pt}c 
 ;{2pt/2pt}c ;{2pt/2pt}c ;{25pt/2pt}c 
 ;{2pt/2pt}c ;{2pt/2pt}c ;{2pt/2pt}c}
\hdashline[2pt/2pt]
 0  &  &  &  & 16  &  &  &   & 4  \\
\hdashline[2pt/2pt]
\end{array}
\right]
\end{array}
\right).
\end{equation}

\noindent\textit{Example 4 Continued:}
We looked at the base matrix in Eq. \eqref{eq:B_3_4_rowConstraintExample} as an example for the row partition constraint. However, it also satisfies the column partition constraint in Eq. \eqref{eq:ColumnConstraint} wherein $\mathbf{c}_1 + \mathbf{c}_3 = \mathbf{c}_2 + \mathbf{c}_4 \mod 10 = [0\; 1\; 2]^\mathsf{T}.$ 
Thus, an additional weight-7 logical codeword can be found that is represented in its canonical form as 
\begin{equation}
\bm{\ell} = \left(
\begin{array}{c|c}
\arrayrulecolor{gray!50}
\left[
 \begin{array}{;{2pt/2pt}c ;{2pt/2pt}c ;{2pt/2pt}c ;{2pt/2pt}c ;{25pt/2pt}c
 ;{2pt/2pt}c ;{2pt/2pt}c ;{2pt/2pt}c ;{25pt/2pt}c 
 ;{2pt/2pt}c ;{2pt/2pt}c ;{2pt/2pt}c  ;{25pt/2pt}c  
 ;{2pt/2pt}c  ;{2pt/2pt}c  ;{2pt/2pt}c  ;{2pt/2pt}}
\hdashline[2pt/2pt]
   &  &  & 0 & & & 0  & & & 0 &    & & 0  & & &   \\
\hdashline[2pt/2pt]
\end{array}
\right] 
&
\left[
 \begin{array}{;{2pt/2pt}c ;{2pt/2pt}c ;{2pt/2pt}c ;{25pt/2pt}c 
 ;{2pt/2pt}c ;{2pt/2pt}c ;{25pt/2pt}c 
 ;{2pt/2pt}c ;{2pt/2pt}c ;{2pt/2pt}c}
\hdashline[2pt/2pt]
 0  &  &  &  & 9  &  &  &   & 8  \\
\hdashline[2pt/2pt]
\end{array}
\right]
\end{array}
\right).
\end{equation}

We have essentially expressed the row and column partition constraints that will create low-weight codewords using an example base matrix. In the next section, we will formalize it for the general case as well as discuss special implications.

\subsection{Generalized Row Column Partition Constraints}
Consider a type-1 quasi-cyclic LDPC parity check base matrix $B$ with $m$ rows and $n$ columns and with circulant size $L$. 
Let $\mathbf{c}_j$ and $\mathbf{r}_i$ represent the $j$-th column and $i$-th row of $B$, respectively. 
Suppose $n$ (or $m$) is even, let $\mathcal{P}$ (or $\mathcal{R}$) be a partition of the set of columns (or rows) into $n/2$ (or $m/2$) pairs, where each pair consists of distinct columns (or rows)
\begin{equation}
\mathcal{P} = \{ (\mathbf{c}_{i_1}, \mathbf{c}_{j_1}), (\mathbf{c}_{i_2}, \mathbf{c}_{j_2}), \ldots, (\mathbf{c}_{i_{n/2}}, \mathbf{c}_{j_{n/2}}) \},
\label{eq:evencolPartition}
\end{equation}
where $\{i_1, j_1, i_2, j_2, \dots, i_{n/2}, j_{n/2} \}$ is any permutation of the index set $\{1, 2, \dots, n\}$, meaning each index appears exactly once in the collection.   
For the partition of rows, we have 
\begin{equation}
\mathcal{R} = \{ (\mathbf{r}_{i_1}, \mathbf{r}_{j_1}), (\mathbf{r}_{i_2}, \mathbf{r}_{j_2}), \ldots, (\mathbf{r}_{i_{m/2}}, \mathbf{r}_{j_{m/2}}) \}, 
\label{eq:evenrowPartition}
\end{equation}
where $\{i_1, j_1, i_2, j_2, \dots, i_{m/2}, j_{m/2} \}$ is any permutation of the index set $\{1, 2, \dots, m\}$.\\
For the case when the number of columns $n$ (or rows $m$) is odd, let $\mathcal{P}$ (or $\mathcal{R}$) be a partition of the set of columns (or rows) into pairs, including a column $\mathbf{c}_{i_{(n+1)/2}}$ (or row $\mathbf{r}_{i_{(m+1)/2}}$) that is paired with itself:
\begin{equation}
\mathcal{P} = \{ (\mathbf{c}_{i_1}, \mathbf{c}_{j_1}), (\mathbf{c}_{i_2}, \mathbf{c}_{j_2}), \ldots, (\mathbf{c}_{i_{(n-1)/2}}, \mathbf{c}_{j_{(n-1)/2}}), (\mathbf{c}_{i_{(n+1)/2}}, \mathbf{c}_{i_{(n+1)/2}}) \}.
\label{eq:oddcolPartition}
\end{equation}
\begin{equation}    
\mathcal{R} = \{ (\mathbf{r}_{i_1}, \mathbf{r}_{j_1}), (\mathbf{r}_{i_2}, \mathbf{r}_{j_2}), \ldots, (\mathbf{r}_{i_{(m-1)/2}}, \mathbf{r}_{j_{(m-1)/2}}), (\mathbf{r}_{i_{(m+1)/2}}, \mathbf{r}_{i_{(m+1)/2}}) \}.
\label{eq:oddrowPartition}
\end{equation}
Here, $\{i_1, j_1, \dots, i_{(n-1)/2}, j_{(n-1)/2}, i_{(n+1)/2} \}$ is any permutation of the index set $\{1, \dots, n\}$, and similarly for the row indices as well. 
For the orthogonality of codewords in the LP-QLDPC code, we need to check if the columns (or rows) of the base matrix $B$ can be paired such that the modulo-$L$ sum of each of the pairs is equal. As mentioned before, all operations are modulo-$L$, applied element-wise to the columns (or rows). 
Now, we define the row/column partitioning constraint (RCPC) as follows:\\

\noindent \textbf{Column Partitioning Constraint:}
\begin{definition}[Even Number of Columns]
\label{def_even_column}
    Given a base matrix with $n \in 2\mathbb{Z}$ columns and distinct indices $1 \le h,i,j,k \le n$, the column partitioning constraint is satisfied if there exists a partition $\mathcal{P}$ as in Eq. \eqref{eq:evencolPartition} such that
\begin{equation}
         \forall (\mathbf{c}_i, \mathbf{c}_j) \text{ and } \forall (\mathbf{c}_h, \mathbf{c}_k)  \in  \mathcal{P}, \, \mathbf{c}_i + \mathbf{c}_j = \mathbf{c}_h + \mathbf{c}_k.
    \end{equation}
    \label{eq:ColPartitionEvenCase}
\end{definition}

\begin{definition}[Odd Number of Columns]
\label{def_odd_column}
    Given a base matrix with $n \in 2\mathbb{Z}+1$ columns and distinct indices $1 \le h,i,j,k \le n$, and $k \neq i, k \neq j$ being the column index not corresponding to a pair, the column partitioning constraint is satisfied if there exists a partition $\mathcal{P}$ as in Eq. \eqref{eq:oddcolPartition} such that
 \begin{align}
    &\left\{
    \begin{array}{l}
        \forall (\mathbf{c}_i, \mathbf{c}_j)  \text{ and } \forall (\mathbf{c}_h, \mathbf{c}_k) \in  \mathcal{P}, \quad  \mathbf{c}_i + \mathbf{c}_j = \mathbf{c}_h + \mathbf{c}_k,\\
        \forall (\mathbf{c}_i, \mathbf{c}_j)  \text{ and } \;\; (\mathbf{c}_k, \mathbf{c}_k) \in  \mathcal{P}, \quad  \mathbf{c}_i + \mathbf{c}_j = \mathbf{c}_k + \mathbf{c}_k.
    \end{array}
    \right.
    \label{eq:ColPartitionOddCase}
\end{align}

\end{definition}

\noindent \textbf{Row Partitioning Constraint:}
\begin{definition}[Even Number of Rows]
\label{def_even_row}
    Given a base matrix with $m \in 2\mathbb{Z}$ rows and distinct indices $1 \le h,i,j,k \le m$, the row partitioning constraint is satisfied if there exists a partition $\mathcal{R}$ as in Eq. \eqref{eq:evenrowPartition} such that
    \begin{equation}
         \forall (\mathbf{r}_i, \mathbf{r}_j) \text{ and } \forall (\mathbf{r}_h, \mathbf{r}_k) \in  \mathcal{R}, \quad \mathbf{r}_i + \mathbf{r}_j =  \mathbf{r}_h + \mathbf{r}_k.
         \label{eq:RowPartitionEvenCase}
    \end{equation}
\end{definition}

\begin{definition}[Odd Number of Rows]
\label{def_odd_row}
    Given a base matrix with $m \in 2\mathbb{Z}+1$ rows, and distinct indices $1 \le h,i,j,k \le m$ and $k\neq i, k \neq j$ being the row index not corresponding to a pair, the row partitioning constraint is satisfied if there exists a partition $\mathcal{R}$ as in Eq. \eqref{eq:oddrowPartition} such that
    \begin{align}
    &\left\{
    \begin{array}{l}
        \forall (\mathbf{r}_i, \mathbf{r}_j)  \text{ and } \forall (\mathbf{r}_h, \mathbf{r}_k) \in  \mathcal{R}, \quad  \mathbf{r}_i + \mathbf{r}_j = \mathbf{r}_h + \mathbf{r}_k,\\
        \forall (\mathbf{r}_i, \mathbf{r}_j)  \text{ and } \;\; (\mathbf{r}_k, \mathbf{r}_k) \in  \mathcal{R}, \quad  \mathbf{r}_i + \mathbf{r}_j = \mathbf{r}_k + \mathbf{r}_k.
    \end{array}
    \right.
    \label{eq:RowPartitionOddCase}
\end{align}
\end{definition}

With Definitions 2-5, the existence of a minimum weight codeword whose Hamming weight is equal to the stabilizer weight can be characterized with the RCPC in the following theorem.\\

\begin{te}
    Given a quasi-cyclic base matrix $B$ with minimum distance $d_{\text{min}}^{\mathcal{C}}$ and of size $m \times n$ of an LP-QLDPC code, if there exists partitioning of rows or columns such that the row/column partitioning constraint (RCPC) is satisfied, then the LP-QLDPC code has a minimum distance $d_{\text{min}}^{\mathcal{Q}} = \min(d_{\text{min}}^{\mathcal{C}},m + n)$.  
\end{te}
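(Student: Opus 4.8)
\emph{Proof idea.} I would prove the two inequalities $d_{\text{min}}^{\mathcal{Q}}\le\min(d_{\text{min}}^{\mathcal{C}},m+n)$ and $d_{\text{min}}^{\mathcal{Q}}\ge\min(d_{\text{min}}^{\mathcal{C}},m+n)$ separately, the first being constructive and the second relying on the structural Assumption of Section~\ref{sec:MinDistLPCodes}. For the upper bound there are two independent sources of low-weight logical codewords. First, by Eq.~\eqref{eq:cw_classical}, the Kronecker product of a minimum-weight codeword of the base code $\mathcal{C}$ with $I_n$ gives a codeword of the LP-QLDPC code supported entirely in the code part and of Hamming weight exactly $d_{\text{min}}^{\mathcal{C}}$; since it does not span the full length it cannot coincide with a stabilizer (the lowest-weight stabilizers are the weight-$(m+n)$ PCM rows, which touch more blocks), so $d_{\text{min}}^{\mathcal{Q}}\le d_{\text{min}}^{\mathcal{C}}$. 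Second, assume without loss of generality that the column partitioning constraint (Definitions~\ref{def_even_column}--\ref{def_odd_column}) holds for a partition $\mathcal{P}$ (the row case is symmetric, using the mirrored canonical form in Eq.~\eqref{eq:cw_newlogical}). I would write down explicitly the candidate codeword $\bm{\ell}=(\bm{\ell}_{\mathrm{C}}\mid\bm{\ell}_{\mathrm{T}})$ that generalizes the $3\times4$ worked example: a single exponent in each of the $n$ code blocks and in each of the $m$ transpose blocks, with within-block positions chosen in an anti-diagonal arrangement (as in the example) so that every row of $B_{\mathrm{X}}$ and $B_{\mathrm{Z}}$ overlaps $\bm{\ell}$ in exactly two cells, and with the transpose-block exponents fixed by the common pair sum $\mathbf{v}^{*}=\mathbf{c}_i+\mathbf{c}_j$ (the self-paired column in the odd case being handled by the relation $\mathbf{v}^{*}=2\mathbf{c}_k$, exactly as the term $2\mathbf{r}_2$ appears in Eq.~\eqref{eq:RowConstraint}). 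By construction $\wt(\bm{\ell})=m+n$.

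It then remains to verify that $\bm{\ell}$ is a genuine logical codeword. Orthogonality to $\mathbf{H}_{\mathrm{X}}$ and $\mathbf{H}_{\mathrm{Z}}$ I would check via Theorem~\ref{Th:EvenMu}: by the anti-diagonal placement, the modulo-$L$ difference of $\bm{\ell}$ with any row of $B_{\mathrm{X}}$ (resp.\ $B_{\mathrm{Z}}$) is a two-element multiset — one code-part term, one transpose-part term — and the requirement that these two terms coincide, ranging over all rows, collapses after normalization to canonical form to exactly the pairwise-equal-sum equations $\mathbf{c}_i+\mathbf{c}_j=\mathbf{c}_h+\mathbf{c}_k$ of the column partitioning constraint. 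To see that $\bm{\ell}$ is not a stabilizer, I would compare canonical block forms: $\bm{\ell}$ has exactly one nonzero cell in each of the $n+m$ blocks, whereas a single row of $\mathbf{H}_{\mathrm{X}}$ packs a full weight-$n$ classical-PCM block into one code block (and leaves transpose blocks empty), and any nontrivial sum of PCM rows either exceeds weight $m+n$ or breaks this one-per-block pattern; alternatively one confirms $\bm{\ell}\notin\mathrm{rowspace}(\mathbf{H}_{\mathrm{X}})$ by exhibiting a logical operator with which it has odd overlap. Hence $d_{\text{min}}^{\mathcal{Q}}\le m+n$, and combining the two sources gives $d_{\text{min}}^{\mathcal{Q}}\le\min(d_{\text{min}}^{\mathcal{C}},m+n)$.

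For the lower bound I would invoke the dichotomy developed in Section~\ref{sec:MinDistLPCodes}. Any logical codeword of the LP-QLDPC code either (i) is supported only in the code part, in which case reading it block-by-block against the block-diagonal code part $I_n\otimes\mathbbm{L}(\mathbf{B})$ of $\mathbf{H}_{\mathrm{Z}}$ forces each nonzero block to be a nonzero codeword of $\mathcal{C}$, so its weight is at least $d_{\text{min}}^{\mathcal{C}}$; or (ii) has nonzero support in the transpose part, and then, by the Assumption that a minimum-weight non-stabilizer logical codeword spans the entire length with equal support over all blocks, it places at least one nonzero in each of the $n+m$ blocks, so its weight is at least $m+n$. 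In either case $\wt\ge\min(d_{\text{min}}^{\mathcal{C}},m+n)$, which together with the upper bound yields the claimed equality (and in particular forces a non-degenerate code, since the stabilizer weight is $m+n$).

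The main obstacle is the lower bound, specifically case (ii): excluding a logical codeword that does touch the transpose part yet is lighter than $m+n$ — for instance one that concentrates its weight on a few blocks and leaves others empty. As the paper itself notes, a full enumeration of low-weight codewords is computationally intractable, so this step genuinely rests on the stated structural Assumption rather than on an independent combinatorial argument; making it fully self-contained would require controlling these "unbalanced" codewords, which the even-multiplicity bookkeeping by itself does not. A secondary subtlety is the "not a stabilizer" verification, which must be carried out in the canonical block form in order to exclude cyclic shifts of stabilizers as well.
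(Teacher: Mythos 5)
Your proposal follows essentially the same route as the paper: the paper's proof is likewise a sketch that exhibits the weight-$(m+n)$ codeword of Eq.~\eqref{eq:GenericRowconstraint} (one exponent per block, with the transpose-part entries fixed by the common pair sum of the partition), verifies orthogonality by the two-cell-overlap/even-multiplicity argument of Theorem~\ref{Th:EvenMu}, and combines this with the classical codewords of Eq.~\eqref{eq:cw_classical} to conclude $d_{\text{min}}^{\mathcal{Q}}=\min(d_{\text{min}}^{\mathcal{C}},m+n)$. The lower-bound gap you flag in case (ii) is real but is not closed in the paper either --- the published argument treats only the row-partition even case, asserts rather than proves that $\bm{\ell}$ is not a stabilizer, and explicitly defers the remaining cases and the full proof to future work --- so your two-inequality decomposition and your identification of where the structural Assumption is load-bearing actually make explicit what the paper leaves implicit.
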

\begin{proof}
    We present the proof sketch for the row partition constraint, even case, and omit the other cases due to space constraints, as they follow similarly. A full proof will be provided in future publication.
    Let $\mathbf{r}_i$, $1 \le i \le m$ denote the rows of the base matrix $B$.
    Let us assume that the row partition constraint is satisfied based on Definition \ref{def_even_row}. 
    This implies that there exists a row partition $\mathcal{R}$ as in Eq. \eqref{eq:RowPartitionEvenCase}. One such partition follows the property: 
    \begin{equation}
        \mathbf{r}_1 + \mathbf{r}_m = \mathbf{r}_2 + \mathbf{r}_{m-1} = \ldots = \mathbf{r}_{m/2-1} + \mathbf{r}_{m/2 +1}  = (u_1,u_2,\ldots, u_n).
    \label{eq:SpecialPartition}
    \end{equation} 
    We now prove that there exists a logical codeword of the following block form:
    \begin{equation}
    \bm{\ell} = \left(
    \begin{array}{c|c}
    \arrayrulecolor{gray!50}
    \left[
    \begin{array}
 {;{2pt/2pt}c ;{2pt/2pt}c ;{2pt/2pt}c ;{2pt/2pt}c ;{25pt/2pt}c
 ;{2pt/2pt}c ;{2pt/2pt}c ;{2pt/2pt}c ;{25pt/2pt}c 
 ;{2pt/2pt}c ;{2pt/2pt}c ;{2pt/2pt}c  ;{25pt/2pt}c  
 ;{2pt/2pt}c  ;{2pt/2pt}c  ;{2pt/2pt}c  ;{2pt/2pt}}
\hdashline[2pt/2pt]
u_{1} &   &   &  &  
  & u_{2} &   &  & 
  &   & \ldots &  &  
  &   &   &  u_{n} \\
\hdashline[2pt/2pt]
\end{array}
\right] 
& 
\left[
 \begin{array}
 { 
 ;{2pt/2pt}c
 ;{2pt/2pt}c 
 ;{2pt/2pt}c
 ;{2pt/2pt}c 
 ;{25pt/2pt}c 
 ;{2pt/2pt}c
 ;{2pt/2pt}c
 ;{2pt/2pt}c
 ;{25pt/2pt}c
 ;{25pt/2pt}c 
 ;{2pt/2pt}c 
 ;{2pt/2pt}c 
 ;{2pt/2pt}c 
;{25pt/2pt}c }
\hdashline[2pt/2pt]
 & & & 0 &
 & & 0 & &
 \ldots 
 & 0 & & & \\
\hdashline[2pt/2pt]
\end{array}
\right]
\end{array}
\right).
\label{eq:GenericRowconstraint}
\end{equation}

In Eq. \eqref{eq:GenericRowconstraint}, each of the $n$ blocks in the code part is composed of $n$ cells, and the $m$ blocks in the transpose part have $m$ cells each. 
First, this vector is not a stabilizer code word - both of $B_\mathrm{X}$  and of $B_\mathrm{Z}$.  
We now prove that it is indeed orthogonal to the base matrices of the LP-QLDPC code. For orthogonality, we check the difference of the corresponding vectors cell by cell for even multiplicity condition in the same way as proved in Theorem \ref{Th:EvenMu}.  
Take any row in $B_\mathrm{X}$ or $B_\mathrm{Z}$; Only two integer terms exist in the difference between the logical codeword $\bm{\ell}$ and any row of the base matrices - one in the code and one in the transpose part. 
The difference terms are either both 0 when the overlapping terms are identical or they are equal as a consequence of the row constraint expressions in Eq.~\eqref{eq:SpecialPartition}. 
One can verify that the even overlap condition is satisfied with both $B_\mathrm{X}$ and $B_\mathrm{Z}$.
Thus, the logical codeword in Eq. \eqref{eq:GenericRowconstraint} is orthogonal to both the base matrices. 
Furthermore, the weight of the codeword is $m+n$. This proves the existence of logical codewords of Hamming weight  $m+n$. 
If the base code has a minimum distance $d_{\text{min}}^{\mathcal{C}} < m+n$, those minimum weight codewords determine the minimum distance. Otherwise, logical codewords of Hamming weight $m+n$ result in $d_{\text{min}}^{\mathcal{Q}} = m+n$. Thus, we prove that the quantum minimum distance is the minimum of $d_{\text{min}}^{\mathcal{C}}$ and $m+n$. We have, 
$d_{\text{min}}^{\mathcal{Q}} = \min( d_{\text{min}}^{\mathcal{C}}, m+n)$. This concludes the proof.  
\end{proof}

\textit{Corollary 1:} 
For a base matrix with $m = 2$ (or $n =2$), the RCPC is always satisfied restricting the minimum distance to be upper-bounded at $n+2$ (or $m+2$) regardless of the classical code chosen. 
\begin{proof}
    Let us again prove for the row constraint case, where such a base matrix $B$ of size $2 \times n$ with circulant size $L$ can be expressed in general with the first row and column of integer circulants shifted to zeros as
\begin{equation}
B=\begin{bmatrix}
0 & 0 & \ldots & 0\\
0 & b_{2,2} & \ldots & b_{2,n}
\end{bmatrix}.
\label{eq: B_2_3_b}
\end{equation}
It is easy to see that one can always find partitioning of rows of Eq. \eqref{eq: B_2_3_b} such that $\mathbf{r_1} + \mathbf{r_2} = \mathbf{u}$, where $\mathbf{u} = (0 \; b_{2,2} \;  \ldots \;  b_{2,n})$. Thus, there always exists logical codewords of Hamming weight $n + m$, where $m = 2$, with a canonical block form: $$\bm{\ell} = \left(
\begin{array}{c|c}
\arrayrulecolor{gray!50}
\left[
 \begin{array}
 {;{2pt/2pt}c ;{2pt/2pt}c ;{2pt/2pt}c ;{2pt/2pt}c ;{25pt/2pt}c
 ;{2pt/2pt}c ;{2pt/2pt}c ;{2pt/2pt}c ;{25pt/2pt}c 
 ;{2pt/2pt}c ;{2pt/2pt}c ;{2pt/2pt}c  ;{25pt/2pt}c  
 ;{2pt/2pt}c  ;{2pt/2pt}c  ;{2pt/2pt}c  ;{2pt/2pt}}
\hdashline[2pt/2pt]
0  &   &   &  &  
  & b_{2,2} &   &  & 
  &   & \ldots &  &  
  &   &   &  b_{2,n} \\
\hdashline[2pt/2pt]
\end{array}
\right] 
& 
\left[
 \begin{array}
 { ;{2pt/2pt}c;{2pt/2pt}c ;{25pt/2pt}c 
 ;{2pt/2pt}c ;{25pt/2pt}c 
 ;{2pt/2pt}c }
\hdashline[2pt/2pt]
     &  0 &  0  &  \\
\hdashline[2pt/2pt]
\end{array}
\right]
\end{array}
\right).$$

\end{proof}

\noindent\textit{Example 1 Revisited:} $\mathcal{C}:[21,8,6]$ LDPC code with circulant size $L=7$ and base matrix $B=\begin{bmatrix}
0&0&0\\
0&1&3
\end{bmatrix}$ satisfies RCPC condition as $m=2$. The minimum distance of the constructed LP-QLDPC code is always upper-bounded by $n+2 = 5$.

\section{DESIGN OF DEGENERATE LP-QLDPC CODES}
\label{sec_CodeDesign}
\subsection{Impact on Code Performance}
As an illustration of the impact of non-degenerate LP-QLDPC codes, we compare the decoding performances for two LP-QLDPC codes with the same code length ($N=600$), base matrix dimensions ($m = 3,n = 4$), minimum distance of the base code ($d_{\text{min}}^\mathcal{C} =20$), and girth $g=8$ - only differing in the quantum minimum distance after constructing the LP-QLDPC codes. We chose the circulant size $L=24$ and base matrices to be 
\begin{equation}
B_{d_{\text{min}}^\mathcal{Q} =7}^{(1)}=\begin{bmatrix}
0&0&0&0\\
0&1&2&3\\
0&5&9&14
\end{bmatrix}, 
\quad
B_{d_{\text{min}}^\mathcal{Q} = 20}^{(2)}=\begin{bmatrix}
0&0&0&0\\
0&1&3&7\\
0&9&16&5
\end{bmatrix}.
\end{equation}
The LP-QLDPC code constructed from $B_{d_{\text{min}}^\mathcal{Q} =7}^{(1)}$ is a non-degenerate quantum code as the quantum minimum distance is equal to the weight of stabilizer generators which is $m+n = 7$. As for the LP-QLDPC code constructed from $B_{d_{\text{min}}^\mathcal{Q} =20}^{(2)}$, it is sufficiently degenerate with high $d_{\text{min}}^\mathcal{Q} = 20$ compared to the weight of stabilizer generators i.e., $m+n = 7$.
We compare their decoding performances under code capacity model using a belief propagation decoder followed by ordered statistics decoding \cite{panteleev_degenerate_2021} (order-10) in Fig. \ref{fig:DecoderSim}. The difference in the \emph{waterfall region of the curve} where the degenerate code outperforms the non-degenerate code shows that the decoder can converge to a significant number of degenerate error patterns resulting in a steeper slope at high error rates.  

\begin{figure}
    \centering
    \includegraphics[width=0.5\linewidth]{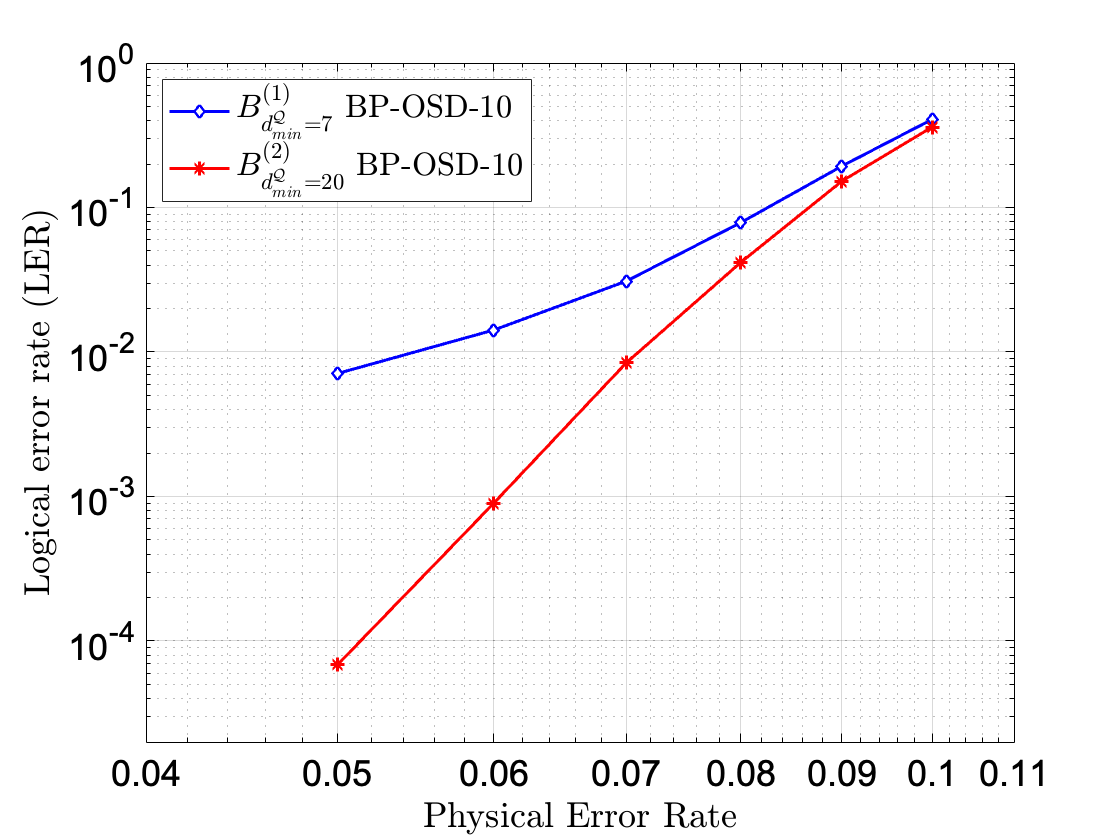}
    \caption{Comparison of decoder performance (BP with OSD-10) for the two LP-QLDPC codes [[600,36,20]] and [[600,36,7]] in the high error rate region. From the drastic change of slopes, it is clear that the choice of non-degenerate code significantly degrades the decoding performance - resulting in numerous logical errors of low weight.}
    \label{fig:DecoderSim}
\end{figure}

\subsection{Degenerate LP-QLDPC Codes} 
The RCPC-based recipe for code construction guarantees degenerate LP-QLDPC codes whose quantum minimum distance is strictly greater than the weight of the stabilizers in the parity check matrix. 
For a given circulant size $L$, using our recipe for LP code construction, we can list all the parity check matrices constructed that have the desired properties.
The first column in Table \ref{tab:m3n4QC_LPCodes} shows the circulant sizes chosen for the different $m = 3, n = 4$ quasi-cyclic base matrices. The second column shows the possible number of base matrices exhaustively searched through - calculated as $L^{m-1 \times n-1}$, as we look only for equivalent matrices with zero circulants in both the first row and first column of base matrix, thus reducing the exhaustive search space from $L^{m \times n}$. The third column gives the number of base matrices with logical codewords of Hamming weight = $m+n$ as they satisfy RCPC. 
The subsequent column lists the number of base matrices such that the minimum distance of the base code is strictly greater than the stabilizer weight. 
Avoiding RCPC satisfied base matrices gives the total number of degenerate LP-QLDPC codes that could be constructed with these parameters in the last column. These numbers are big as we perform an exhaustive search, and many of the base matrices have girth $g= 4$. 
Increasing the circulant size as well as optimizing the minimum distance and girth of the base matrix with other code parameters, we can effectively reduce the number of base matrices to select from. 
For instance, for $L=13$, precisely 2592 of base matrices have $d_{\text{min}}^{\mathcal{Q}} = d_{\text{min}}^{\mathcal{C}} = 14$ and girth $g = 8$. In future work, we plan to develop our framework to incorporate relevant parameter choices of interest, such as specific girth constraints, minimum distance and optimized protographs for practical applications.

\begin{table}[h!]
\centering
\caption{Number of Base matrices with $m=3, n=4$ for different circulant sizes}
\label{tab:m3n4QC_LPCodes}
\begin{tabular}{|c|c|c|c|c|}
\hline
\textbf{Circulant size ($L$)} & \textbf{Exhaustive} & \textbf{RCPC} & $d_{min}^{\mathcal{C}} > 7$ & $d_{min}^{\mathcal{Q}} > 7$ \\ \hline
4  & $4^{6} = 4096$  & 700 & 0 & 0 \\ \hline
5  & 15625    & 1993 & 0& 0 \\ \hline
6  & 46656    & 3876 & 1008 & 720 \\ \hline
7  & 117649   & 7725 & 55440 & 53856 \\ \hline
8  & 262144   & 12628 & 128592 & 124704 \\ \hline
9  & 531441   & 20961 & 189360 & 183744\\ \hline
10 & 1000000  & 31084 & 656784 & 644544 \\ \hline
\end{tabular}
\end{table}

\section{Conclusions and Future Direction}
\label{sec:Conclusion}
In this work, we focused on designing type-1 quasi-cyclic LP-QLDPC codes with guarantees on their minimum distances. Theoretical proof for ensuring degeneracy (higher minimum distance than stabilizer weight) of LP-QLDPC codes is given. This is a key step in our methodology for constructing good quantum codes. The next step is to derive sufficient constraints to guarantee that the codes designed have $d_{\text{min}}^{\mathcal{Q}} = d_{\text{min}}^{\mathcal{C}}$. In future works, we aim to explore combinatorial conditions that remove newly identified problematic structures and extend our analysis to more general protographs. Constructing more structured quasi-cyclic codes that automatically guarantee minimum distance preservation property is another open problem. This can leverage distance bounds and strategies for classical quasi-cyclic protograph LDPC codes. Also, we are interested in investigating the choice of circulant matrix entries that ease hardware constraints. 

\section*{Acknowledgment} B. Vasi\'{c} acknowledges the support of the NSF under grants CCF-2420424, CIF-2106189, NSF-ERC 1941583, CCF-2100013, and CCSS-2052751. 
Authors thank Michele Pacenti for assisting with the preliminary simulation. 
B. Vasi\'{c} and D. Declercq have disclosed an outside interest in Codelucida to the University of Arizona. Conflicts of interest resulting from this interest are being managed by The University of Arizona in accordance with its policies.


\end{document}